\newtheorem*{Sperner}{Sperner's Theorem}
\newtheorem{theorem}{Theorem}
\newtheorem{observation}{Observation}
\newtheorem{lemma}{Lemma}
\newtheorem{corollary}{Corollary}
\newtheorem{definition}{Definition}
\newcommand{\nn}{\nonumber}
\newcommand{\ba}{\begin{array}}
\newcommand{\ea}{\end{array}}
\begin{document}

\title{Hypergraph-theoretic characterizations for
LOCC incomparable ensembles of multipartite CAT states}\thanks{A
part of this work was done when the first author was visiting the
Perimeter Institute for Theoretical Physics, Waterloo, and the third author
was visiting ETH, Zurich.}

\author{Arijit Ghosh}
\email[E-mail:~]{arijitiitkgpster@gmail.com}

\affiliation{Department of Computer Science and Engineering,\\ 
Indian Institute of Technology, Kharagpur 721302, India.}
\author{Sudebkumar Prasant Pal}
\email[E-mail:~]{spp@cse.iitkgp.ernet.in}
\affiliation{Centre for Theoretical Studies and\\
Department of Computer Science and Engineering,\\ 
Indian Institute of Technology, Kharagpur 721302, India.}

\author{Anupam Prakash}
\email[E-mail:~]{anupamprakash1@gmail.com}

\affiliation{Department of Computer Science and Engineering,\\ 
Indian Institute of Technology, Kharagpur 721302, India.}
\author{Virendra Singh Shekhawat} 
\email[E-mail:~]{shekhawat.virendra@gmail.com}

\affiliation{Department of Computer Science and Engineering,\\ 
Indian Institute of Technology, Kharagpur 721302, India.}

\date{\today}

\begin{abstract}

Using graphs and hypergraphs to systematically model collections of
arbitrary subsets 
of parties representing {\it ensembles (or collections)} of
shared multipartite CAT states, we 
study transformations between such {\it ensembles} under
{\it local operations
and classical communication (LOCC)}.
We show using partial entropic criteria, 
that any two such distinct ensembles
represented by {\it $r$-uniform hypergraphs} with the same number of
hyperedges (CAT states), are LOCC incomparable for even integers 
$r\geq 2$, generalizing results in \cite{mscthesis,sin:pal:kum:sri}.
We show that the cardinality
of the largest set of mutually LOCC incomparable 
ensembles represented by $r$-uniform hypergraphs for even $r\geq 2$, 
is exponential in the number of parties.
We also demonstrate LOCC incomparability between two 
ensembles represented by 3-uniform hypergraphs 
where partial entropic criteria do not help in
establishing incomparability. Further we
characterize LOCC comparability of EPR graphs in a model where LOCC 
is restricted to teleportation and edge destruction. 
We show that 
this model is equivalent to one in which LOCC transformations
are carried out through a sequence of operations where each 
operation adds at most one new EPR pair.

\vspace{0.1in}
\noindent {\it Keywords:} LOCC incomparability, entanglement, hypergraph
\end{abstract}

\pacs{03.67.Mn,03.65.Ud}

\maketitle

\section{Introduction}

Certain operations like entanglement teleportation and
creation of multipartite entanglement states can 
be done using only classical communication, with the aid of
preshared quantum entanglement between the geographically
separated parties \cite{teleport,nc}. 
Correlations in different problems being solved 
between various subsets of parties in a scalable network may 
be exploited by using multiple preshared
entanglements within those subsets of parties 
for reducing classical communication complexity. See \cite {bdht,bcd,cb}
for problems where such savings are possible. 
The specifications of requisite patterns of entanglement
may change over a period of time in a quantum computation network;
we may require to solve different problems between different
sets of combinations of
parties. In such a scenario, it becomes necessary to
transform one set of entanglement combinations across the network,
into another distinctly different set of entanglements.
The main question is whether such transformations 
from one pattern of multiple preshared entanglements
between parties to another such pattern are possible
using only LOCC ({\it local operations and classical communication}). 
Nielsen \cite{neil99,nc} derived important  
results about conditions for LOCC transformations between 
bipartite states and the partial order between such states.
Linden et al.  \cite{lin:pop:sch:wes:99}, considered reversible 
transformations using local quantum operations and classical communication
for multi-particle environments. 
For multipartite entanglement ensembles of CAT states
shared between various combinations of parties, several 
important characterizations of
LOCC incomparability were derived in \cite{mscthesis,sin:pal:kum:sri} using
the method of {\it bicolored merging}, based on
{\it partial entropic criteria}.    
In this work we further characterize and classify
certain incomparable ensembles of multipartite CAT sates
combinatorially, and study
LOCC transformations between ensembles that are not incomparable.
We study the scope and limitations of partial entropic criteria
in establishing LOCC incomparability between multipartite states.
Before we outline our contribution, we present a few necessary definitions 
and some notation.



We need a few definitions and some notation.
An {\em EPR graph} $G(V,E)$ is a graph
whose vertex set 
$V$ is a set of {\it parties}, and an edge
$\{u,v\}$, where  $u,v\in V$, represents shared
entanglement in the form of an EPR
pair between the parties $u$ and $v$.
An {\it entanglement configuration hypergraph (EC hypergraph)}
$H(S,F)$, has a set
$S$ of $n$ parties and a set $F= \{E_1,
E_2,\cdots, E_m\}$ of $m$ hyperedges,
where $ E_i  \subseteq S; i = 1, 2, \cdots , m$,
and $E_i$ is such that its elements
(parties) share an $|E_i|$-CAT state.
So, an EPR graph or an EC hypergraph represents multipartite states
with multiple entanglements in the form of CAT states, where each CAT state
is represented by an edge or hyperedge, respectively.
If one such multipartite state $|\phi \rangle$ can be transformed into 
another such state $|\psi \rangle$ by LOCC, 
then we denote this transformation as $|\phi \rangle \geq |\psi \rangle$
(or $|\psi \rangle \leq |\phi \rangle$). 
If none of 
$|\phi \rangle \geq |\psi \rangle$
and $|\psi \rangle \geq |\phi \rangle$
hold then we say that the two ensembles or states are {\it LOCC incomparable}.
If one or both of  
$|\phi \rangle \geq |\psi \rangle$
and $|\psi \rangle \geq |\phi \rangle$
hold, then we say that the two ensembles or states are {\it LOCC comparable}.

If there is a path in a graph between every 
pair of vertices then the
graph is called a {\it connected graph}.
{\it EPR trees} are {\it connected} EPR graphs with 
$n$ vertices and exactly $n-1$ edges. 
A {\it spanning tree} is a graph which connects
all vertices without forming cycles. EPR trees are indeed spanning
trees.
There is a unique path between any two vertices
in a spanning tree.
There may be more than one path between a pair
of vertices in an arbitrary graph.


{\it EC hypertrees} are EC hypergraphs with no {\it cycles}; 
hypertrees have at most one vertex common between any two hyperedges.
{\it Connectedness} for hypergraphs is defined as follows.
An alternating sequence of vertices and hyperedges $\{a,E_1,v_1,
E_2,v_2,\cdots ,E_i,v_i,E_{i+1},\cdots ,E_j,b\}$ 
in a hypergraph  $H=(S,F)$  is  called  a  {\it
hyperpath} from a vertex $a\in S$ to a vertex $b\in S$ if
$E_i$  and $E_{i+1}$ have a common vertex $v_i$ in $S$, for  all $1\leq i\leq
j-1$, $a\in E_1$, and $b\in E_j$, where the vertices $v_i$, $1\leq i\leq j-1$,
are distinct, and 
the hyperedges $E_i$, $1\leq i\leq j$ are distinct.
If the start and end vertices of a hyperpath are identical 
then the hyperpath is called a {\it cycle} and the hypergraph is said to be
{\it cyclic}.
If the hypergraph $H$ has a hyperpath between every pair $a,b\in S$,
then $H$ is said to be {\it connected}.

The {\it degree} of a vertex in a (hyper)graph is the 
number of (hyper)edges containing
that vertex, in the (hyper)graph. 
The {\it degree} of a vertex subset in a hypergraph
is the number of hyperedges containing
all vertices of the vertex subset, in the hypergraph. 
We use $E(G)$ to denote the set of (hyper)edges of a (hyper)graph $G$.


%



Partial entropic arguments as applicable to ensembles of multipartite CAT 
states were used in the method of {\it bicolored merging} 
as in \cite{mscthesis,sin:pal:kum:sri} to establish several important 
LOCC incomparability results. We begin this paper by discussing the
equivalence of partial entropic criteria and the method of bicolored merging
in Section \ref{bipar},
as applied to such ensembles as EPR graphs and EC hypergraphs.  
In Appendix A, we elaborate a formal proof of this equivalence.
It was shown in \cite{sin:pal:kum:sri} that
(i) $n-2$ copies of $n$-CAT states shared between $n$ geographically
separate parties cannot be converted
to an {\it EPR tree} shared between the $n$ parties using only $LOCC$, and
(ii) two distinct {\it $r$-uniform EC hypertrees}
shared between $n$ geographically separated parties
are $LOCC$ incomparable.
In this paper we further demonstrate the power of partial entropic 
criteria in Section \ref{comec} by proving the LOCC incomparability 
of any two distinct {\it $r$-uniform EC hypergraphs} having 
the same number of hyperedges, for even integers $r\geq 2$, generalizing
results in \cite{mscthesis,sin:pal:kum:sri} for $r$-uniform EC hypertrees.
The question of incomparability remains open for $r$-uniform 
EC hypergraphs with the same number of hyperedges, for odd values
of $r\geq 3$. We conjecture that incomparability
holds for odd values of $r$ as well. 
In order to establish incomparability results we use 
(i) the {\it inclusion-exclusion
principle}, and (ii) the generalized notion of {\it degree of a vertex subset},
to model partial entropy of {\it reduced or collapsed hypergraphs}.
Using the same technique, we present a significantly simpler proof of 
the LOCC incomparability result in \cite{sin:pal:kum:sri} 
of distinct $r$-uniform EC hypertrees for all integers $r\geq 2$. 
We observe that changing the set of hyperedges but keeping 
the total amount of entanglement (in the sense of the number of CAT states or
hyperedges in an ensemble) fixed,
induces incomparability in these distinct EC hypergraphs. 
Changing the set of hyperedges in this manner, we can generate numerous
mutually LOCC incomparable hypergraphs in 
a natural {\it partial order} called the {\it LOCC partial order},
defined as follows.
A node in this {\it LOCC partial order}
represents an equivalence class of states that are mutually {\it LU 
(locally unitary) 
equivalent}. The directed edge $(v,w)$ exists in the
{\it directed acyclic graph} representing this partial order,
if any state in the
equivalence class of $v$ can be transformed by LOCC to
a state in the equivalence class $w$.
In Section \ref{widthsec}, 
we obtain
the maximum number of mutually LOCC incomparable  
$r$-uniform EC hypergraphs, using {\it Sperner's Theorem}
\cite{sperner} for even $r$. This yields the  
{\it width} of the LOCC partial order, which is exponential in $n$. 
This demonstrates the necessity of quantum communication
for transformations between these numerous incomparable states. 
These results are reported in \cite{btechthesis}.

In this paper we also study incomparability for ensembles where partial entropy 
criteria are not useful in establishing incomparability.
The well known example of 3EPR-2GHZ falls in this category
(see \cite{bennett00}). 
In Section \ref{comec},
we provide another example of a pair of
ensembles represented by $3$-uniform 
hypergraphs having 4 hyperedges each, which cannot be shown to
be incomparable using partial entropic criteria.
The EC hypergraphs $H_1$ and $H_2$ representing these ensembles 
have hyperedge sets $E_1 =\{\{123\},\{156\},\{245\},\{346\}\}$
and $E_{2}=\{\{456\},\{234\},\{136\},\{125\}\}$, respectively.
We establish the incomparability of these hypergraphs
using LU inequivalence of reduced states 
following results in \cite{bennett00}. 
As conjectured above, we believe that any two $r$-uniform hypergraphs 
with equal number of hyperedges are LOCC incomparable for
all integers $r\geq 2$.  

Finally, using combinatorial techniques 
in Section \ref{cutssec}, 
we characterize LOCC comparability for EPR graphs in a model where
LOCC is restricted to the operations of {\it edge destruction} 
and {\it teleportation}. The NP-completeness of the problem 
of deciding LOCC comparability in this restricted model follows 
from results in \cite{disjoint}. 
We also show that restricted LOCC is 
equivalent to a model of LOCC where new edges are added 
one at a time.

\section{Partial entropy, bicolored merging and LOCC incomparability}
\label{bipar} 
Suppose we create a bipartition amongst the $n$
parties in such a way that the partial entropy is different for the
two given states, $|\psi \rangle$ and $|\phi \rangle$, where these
states are represented by EPR graphs or EC hypergraphs. In the case
of EPR graphs, the difference in partial entropy between the two
states is simply the difference between the number of EPR pairs
shared across the partition in the two states. In the case of
multipartite states represented by EC hypergraphs, the
difference in partial entropies is equal to the difference in
the number of multipartite CAT states shared across the partition in
the two states. In both these cases, the state corresponding to the
higher entropy cannot be obtained from that with lower entropy, as
long as only $LOCC$ is used. 

In order to show that a
multipartite state 
$|\psi \rangle$ 
can not be converted to the
multipartite state 
$|\phi \rangle$ 
by $LOCC$, we may
partition the original set of parties into 
(only) two hypothetical {\it merged parties or entities}
say, $A$ and $B$. We may also view this as coloring the parties with two
colors, one for those assigned to $A$ and the other for those to $B$.
Parties in set $A$ are merged into one single party. Similarly,
parties in set $B$ are merged into another single party. Each
hyperedge shared between parties of $A$ and $B$ in 
$|\psi \rangle$ 
(or $|\phi \rangle$) 
is reduced to a
single hypothetical EPR pair between the merged parties $A$ and $B$
resulting in the 
{\it bicolor merged graph (BCM)} say $H_1^{bcm}$ (or $H_2^{bcm}$), as 
defined  
in \cite{sin:pal:kum:sri}.
Then we count the
number of (such hypothetical) EPR-pairs shared across
the merged parties $A$ and $B$ in these two graphs. 
If $H_1^{bcm}$
has a smaller such
count then 
$|\psi \rangle$  
cannot be transformed by LOCC into the other state
$|\phi \rangle$ 
whose BCM
$H_2^{bcm}$
has a larger count. The partitioning into two parts
and the collapsing of all parties into these two parts is referred to
as {\it bicolored merging} in \cite{sin:pal:kum:sri}. See Figures
\ref{figure4} and \ref{figure5} for an illustration. 

\begin{figure}[ht]
\centerline{\epsfig{figure=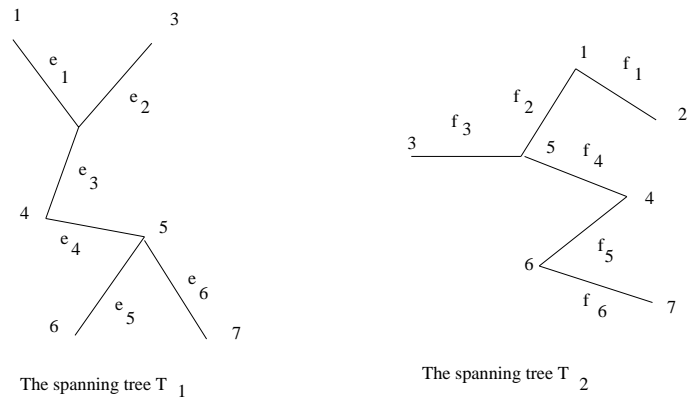,width=90mm}} \caption{Two
LOCC incomparable 7-vertex EPR spanning trees (2-uniform
hypertrees) from \cite{sin:pal:kum:sri}.} \label{figure4}
\end{figure}

\begin{figure}[ht]
\rotatebox{0}{\centerline{\epsfig{figure=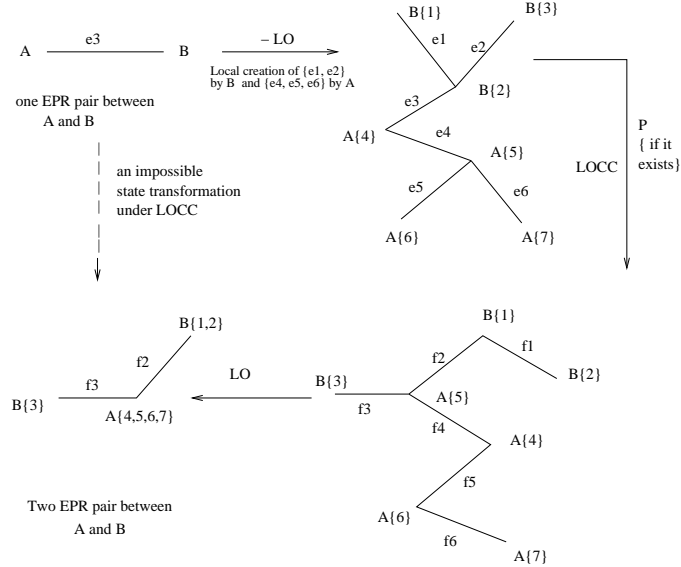,width=90mm}}}
\caption{Illustration of the method of {\it bicolored merging}, 
establishing LOCC incomparability as in \cite{sin:pal:kum:sri}.}
\label{figure5}
\end{figure}

Now we formally state the scope of the technique of {\it bicolored
merging} in establishing LOCC incomparability of EPR graphs and EC
hypergraphs in graph theoretic notation and terminology as follows.
Let $H_1$ and $H_2$ be two {\it EC hypergraphs}
shared between $n$ geographically separated parties such that
a bipartition $(A,B)$ of set of $n$ vertices has strictly smaller
partial entropy for $H_1$. So $H_1$ cannot be transformed to 
$H_2$ by LOCC.
The partition $(A,B)$ of the set of parties may be viewed as
as a $cut$ of the hypergraph, cutting across hyperedges that have at
least one vertex in each of the parts $A$ and $B$. The number of
hyperedges of $H_1$ ($H_2$)
across the cut is called the $capacity$ of the cut in the
respective hypergraph.

\begin{observation}
  If the $capacity$ of a cut $(A,B)$ is strictly smaller in EC hypergraph
  $H_1$ than in EC hypergraph $H_2$, then
  $H_1$ cannot be transformed into $H_2$ by LOCC.
\label{obscut}
\end{observation}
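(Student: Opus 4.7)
The plan is to reduce the multipartite setting to a bipartite one by regarding $A$ and $B$ as two single merged entities, and then to appeal to the monotonicity of bipartite entanglement entropy under LOCC.

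First, I would compute the partial entropy $S(\rho_A^{(i)})$ of the reduced density matrix on the $A$-side for each of the two multipartite states $|\psi_i\rangle$ represented by $H_i$, $i=1,2$. Since distinct hyperedges of an EC hypergraph act on disjoint groups of qubits, the full multipartite state factorizes as a tensor product of CAT states, one per hyperedge. A hyperedge contained entirely in $A$ or entirely in $B$ contributes a pure factor to $\rho_A^{(i)}$ and hence no entropy. A hyperedge $E$ that crosses the cut yields an $|E|$-CAT state whose $B$-portion, when traced out, leaves the diagonal mixed state $\frac{1}{2}\bigl(|0\rangle\langle 0|^{\otimes k}+|1\rangle\langle 1|^{\otimes k}\bigr)$ on the $A$-portion (where $k=|E\cap A|$), which has von Neumann entropy exactly $1$. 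By additivity of entropy across tensor factors, $S(\rho_A^{(i)})$ equals the number of hyperedges of $H_i$ crossing $(A,B)$, i.e., the capacity of the cut in $H_i$.

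Second, I would invoke the standard fact that any $n$-party LOCC protocol is, in particular, an LOCC protocol between the merged entities $A$ and $B$ (a local operation at a party in $A$ is still local when $A$ is viewed as one side), and that bipartite LOCC cannot increase the von Neumann entropy of the reduced state on either side. A hypothetical LOCC transformation $|\psi_1\rangle \to |\psi_2\rangle$ would therefore force $S(\rho_A^{(1)})\geq S(\rho_A^{(2)})$, contradicting the strict inequality between cut capacities established in the first step. Hence $H_1$ cannot be transformed into $H_2$ by LOCC.

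The step requiring the most care is the first one: one must explicitly pin down the tensor-product structure of the multipartite state associated with an EC hypergraph so that the reduced density matrix genuinely splits as a product over hyperedges and the per-hyperedge entropies add. The remaining ingredients, namely the one-ebit calculation for a CAT state across any nontrivial bipartition and the monotonicity of bipartite entanglement entropy under LOCC, are standard and can be cited.
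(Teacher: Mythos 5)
Your proof is correct and follows essentially the same route as the paper's own argument (sketched in Section II and carried out in detail in Appendix A): the state factorizes over hyperedges, each crossing hyperedge contributes exactly one ebit of partial entropy across the merged bipartition $(A,B)$, so the cut capacity equals $S(\rho_A)$, and monotonicity of bipartite entanglement entropy under LOCC gives the conclusion. No changes needed.
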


Bicolored merging, or equivalently, the method of partial
entropy may not help in establishing the LOCC incomparability
of certain pairs of states.
The example of the two states viz., 3EPRs and 2GHZs,
shared between three parties as in
\cite{bennett00}, 
is an example where partial entropic methods cannot
help us in establishing their LOCC incomparability.

%
%

\section{Combinatorics of entanglement configuration hypergraphs (EC
hypergraphs)}

\label{comec}

LOCC operations can transform one EC
hypergraph into another.
There are examples of large sets of EC hypergraphs
that are mutually LOCC
incomparable.
One such set is that of labeled $r$-uniform hypertrees
\cite{sin:pal:kum:sri}.
We first establish results for EPR graphs and
then generalize them to certain classes of EC hypergraphs.
The following lemma applies to EC hypergraphs and EPR graphs;
the proof is presented for the general case of EC hypergraphs.
\begin{lemma}
\label{degreefall}
The degree of a vertex $v$ in an EC hypergraph (or in an EPR
graph) cannot increase under $LOCC$ transformations.
\end{lemma}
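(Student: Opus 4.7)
The plan is to deduce this directly from Observation~\ref{obscut} by choosing the right cut. Specifically, for a fixed vertex $v$, I would take the bipartition $(A,B)$ with $A = \{v\}$ and $B = V \setminus \{v\}$, and compute the capacity of this cut in the EC hypergraph $H$.

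Under the standing assumption that every hyperedge has cardinality at least two (a singleton hyperedge represents a ``$1$-CAT state'' and carries no entanglement), a hyperedge $E_i$ crosses the cut $(\{v\}, V\setminus\{v\})$ if and only if $v \in E_i$. Hence the capacity of this cut is exactly the degree of $v$ in $H$. The same identity holds for an EPR graph, since edges are just $2$-uniform hyperedges.

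Now suppose, towards the stated conclusion, that an LOCC protocol transforms $H_1$ into $H_2$, i.e. $|\phi_{H_1}\rangle \geq |\phi_{H_2}\rangle$. Observation~\ref{obscut} states that a strictly smaller cut capacity in the source hypergraph obstructs LOCC conversion to the target; taking the contrapositive, the capacity of every cut in $H_2$ must be at most the capacity of the same cut in $H_1$. Applied to the cut $(\{v\}, V\setminus\{v\})$, this yields $\deg_{H_2}(v) \leq \deg_{H_1}(v)$, proving the lemma.

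There is no real obstacle here; the only subtlety to flag explicitly is the convention that hyperedges have size at least two, so that the correspondence ``hyperedges crossing the singleton cut at $v$'' $=$ ``hyperedges incident to $v$'' is exact. Once that is stated, the lemma is an immediate specialization of the partial-entropy / bicolored-merging bound already recorded in Section~\ref{bipar}.
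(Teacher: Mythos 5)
Your proposal is correct and is essentially identical to the paper's own proof: both take the singleton cut $(\{v\},V\setminus\{v\})$, identify its capacity with $\deg(v)$, and apply the contrapositive of Observation~\ref{obscut}. Your explicit remark that hyperedges must have size at least two is a harmless clarification the paper leaves implicit.
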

\begin{proof}
Let $H_1$ be a
{\it EC  hypergraph} which can 
be transformed into another
EC hypergraph $H_2$ by $LOCC$. For a
vertex $v\in H_1$,
define a bipartition of $H_1$ by
placing $v$ in one set
and the remaining vertices in the other.
The number of edges across
the cut defined by this bipartition is
equal to the degree of $v$.
By the contrapositive of Observation \ref{obscut} above, it follows
that the degree of $v$ cannot
increase under LOCC.
\end{proof}

The above lemma is of great importance
as it provides a localized view
of a party and states that its total entanglement
measure with other parties does not go up under $LOCC$ operations.
Using this result we generalize
the incomparability result of EPR trees as in
\cite{sin:pal:kum:sri} to $EPR$ graphs with
the same number of edges.
\begin{theorem}
\label{EPRincom}
Any two distinct labeled
$EPR$ graphs with the same number of vertices
and edges are $LOCC$ incomparable.
\end{theorem}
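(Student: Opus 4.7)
The plan is to assume, toward contradiction, that $H_1$ and $H_2$ are distinct labeled EPR graphs on the same vertex set $V$ with $|E(H_1)|=|E(H_2)|$ and that $H_1$ can be transformed into $H_2$ by LOCC. The strategy is to show that these hypotheses force $E(H_1)=E(H_2)$, which contradicts $H_1\neq H_2$; the reverse direction is symmetric, so LOCC incomparability follows.

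The first step extracts equality of all vertex degrees. Lemma \ref{degreefall} gives $\deg_{H_1}(v)\geq \deg_{H_2}(v)$ for every $v\in V$. Summing over $V$ and invoking the handshake identity $\sum_{v}\deg(v)=2|E|$ yields $2|E(H_1)|\geq 2|E(H_2)|$; equality of edge counts forces each of these vertex inequalities to be tight, so $\deg_{H_1}(v)=\deg_{H_2}(v)$ for all $v\in V$.

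The second step refines the control of Lemma \ref{degreefall} from singleton cuts to two-vertex cuts via Observation \ref{obscut}. For any pair $u,v\in V$, the capacity of the cut $(\{u,v\},V\setminus\{u,v\})$ in an EPR graph $H$ equals $\deg_H(u)+\deg_H(v)-2\,e_{uv}(H)$, where $e_{uv}(H)\in\{0,1\}$ is the indicator that $\{u,v\}$ is an edge of $H$. By the contrapositive of Observation \ref{obscut}, this capacity in $H_1$ is at least that in $H_2$; combined with the equality of degrees from the previous step, this reduces to $e_{uv}(H_1)\leq e_{uv}(H_2)$ for every pair $\{u,v\}$. Therefore $E(H_1)\subseteq E(H_2)$, and the assumed equality of edge counts upgrades this inclusion to $E(H_1)=E(H_2)$, contradicting $H_1\neq H_2$.

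The delicate point is recognising that Lemma \ref{degreefall} by itself is too weak: two distinct labeled simple graphs on the same vertex set can share a common degree sequence, so singleton-cut monotonicity cannot distinguish them. The remedy is to consider cuts defined by two-vertex subsets, for then, once degrees are known to agree, the cut capacity encodes precisely the presence or absence of that single candidate edge. Once this refinement is in place, the rest of the proof is a one-line counting argument.
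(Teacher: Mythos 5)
Your proof is correct and follows essentially the same route as the paper's: first forcing degree equality via Lemma \ref{degreefall} and the handshake identity, then comparing the capacity of two-vertex cuts $(\{u,v\},V\setminus\{u,v\})$ via Observation \ref{obscut} to detect the presence of the edge $\{u,v\}$. The only cosmetic difference is that the paper picks a single edge in $E(G)\setminus E(H)$ and derives an immediate contradiction, whereas you run the cut argument over all pairs to get the inclusion $E(H_1)\subseteq E(H_2)$ and then invoke the edge count; the two are logically equivalent.
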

\begin{proof}
Let $G$ and $H$ be two distinct labeled $EPR$ graphs defined
on the same set $V$ of vertices, such that both the graphs have
the same number of edges. For the sake of contradiction,
suppose $G$ and
$H$ are not LOCC incomparable. Then, by the definition of
LOCC incomparability,
either $G\geq H$ or $H\geq G$. Without loss
of generality, it can be assumed that $G\geq H$ (i.e.,
$G$ can be transformed to $H$ using LOCC).
Since both graphs have the same number $E(G)=E(H)$ of edges,
\begin{equation}\label{equsamenumedges}
\sum_{v\in V}deg_G(v)=2|E(G)|=2|E(H)|=\sum_{v\in V}deg_H(v)
\end{equation}
where $deg_G(v)$ ($deg_H(v)$) is the degree of the vertex $v\in V$
in EPR graph $G$  ($H$).
Further, by Lemma \ref{degreefall}, the degree of no vertex can
increase under LOCC.
Therefore, the degrees of all vertices remain unchanged.
Since the two graphs $H$ and $G$ are distinct,
there exists an edge $\{u,v\}$ in $G$,
which is not present in
$H$.
Define a bipartition $(\{u,v\},V\setminus \{u,v\})$ of the graph $G$ by
coloring vertices
$u$ and $v$ with color $1$, and the rest of the vertices
with color $2$.
See Figure \ref{figure3}.
The number of edges across the cut in this partition
is
$deg_{G}(u)+deg_{G}(v)-2$.
Since $(u,v)$ is not present in $H$,
the same cut due to the same bipartition of the vertices
will have $deg_{H}(v) + deg_{H}(u)$ edges in $H$.
Since the degree of each labeled vertex is the same
in both $G$ and $H$, the number of edges
in the reduced graph
after bicolored merging increases by 2.
This is not possible under LOCC
by Observation \ref{obscut}.
So, contrary to our assumption, the
two labeled EPR graphs $G$ and $H$ must be
identical.

\end{proof}

\begin{figure}[ht]
\rotatebox{0}{\centerline{\epsfig{figure=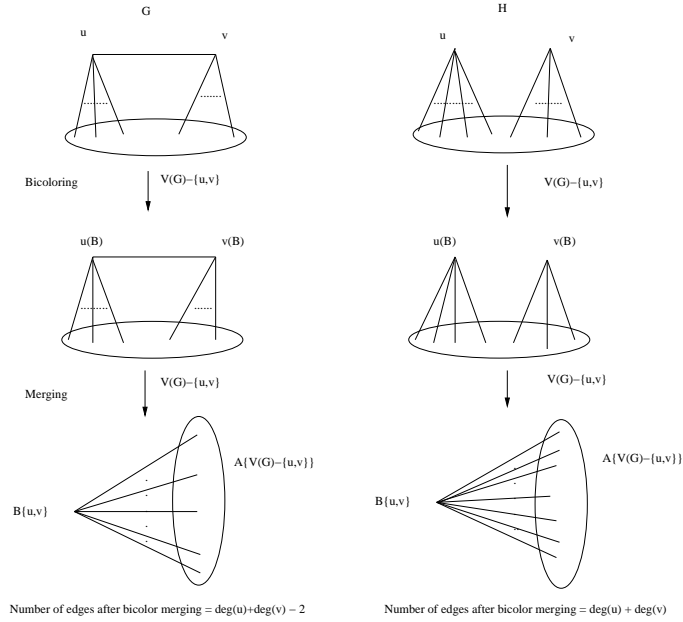,width=90mm}}}
\caption{{\it LOCC incomparability of two EPR graphs with equal number
of vertices and edges.}}
\label{figure3}
\end{figure}
$EPR$ graphs can be viewed as
$2$-uniform EC hypergraphs. It is natural to expect that
results shown above for EPR graphs hold also for
$r$-uniform EC hypergraphs for $r\geq 3$.

\begin{lemma}\label{l3}

Let $H_1$ and $H_2$ be
two {\it $r$-uniform EC hypergraphs}
defined on
the same vertex set $V$. If $H_1$ and $H_2$ have the same
number of hyperedges and $H_1\geq H_2$, then 
the degrees of all vertices in $H_{1}$ and $H_{2}$ are the 
same.

\end{lemma}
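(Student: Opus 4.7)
The plan is to combine the monotonicity of vertex degree under LOCC (Lemma \ref{degreefall}) with a counting identity that links vertex degrees to the number of hyperedges, and then use the hypothesis that both hypergraphs have the same number of hyperedges to pin down the degrees exactly.

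First I would apply Lemma \ref{degreefall} vertex by vertex: since $H_1 \geq H_2$, for every $v \in V$ we must have $\deg_{H_1}(v) \geq \deg_{H_2}(v)$. This gives a one-sided inequality for each vertex individually, but by itself does not force equality.

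Next I would invoke the handshake-style identity for $r$-uniform hypergraphs. Because every hyperedge contains exactly $r$ vertices, counting incidences in two ways gives
\begin{equation*}
\sum_{v \in V} \deg_{H_i}(v) \;=\; r \,|E(H_i)|, \qquad i = 1,2.
\end{equation*}
By hypothesis $|E(H_1)| = |E(H_2)|$, so the two sums are equal. Combined with the pointwise inequalities $\deg_{H_1}(v) \geq \deg_{H_2}(v)$, this forces $\deg_{H_1}(v) = \deg_{H_2}(v)$ for every $v \in V$, since a sum of nonnegative quantities that are each $\geq 0$ can be zero only when every summand vanishes.

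Honestly I do not expect any significant obstacle here; the argument is the same bookkeeping step that appears inside the proof of Theorem \ref{EPRincom} (see equation \eqref{equsamenumedges}), merely with the factor $2$ replaced by $r$ to account for $r$-uniformity. The only thing to be careful about is that Lemma \ref{degreefall} is applied in the correct direction (degrees cannot \emph{increase} under LOCC, hence degrees in the \emph{target} $H_2$ are bounded above by those in the source $H_1$), which is exactly what makes the sum-equality force pointwise equality.
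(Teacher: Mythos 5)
Your argument is correct and is essentially the paper's own proof: the pointwise inequality $\deg_{H_1}(v)\geq\deg_{H_2}(v)$ from Lemma \ref{degreefall}, combined with the equality of the degree sums (which follows from $r$-uniformity and $|E(H_1)|=|E(H_2)|$), forces equality at every vertex. The paper states this more tersely but the reasoning is identical, including the direction in which the monotonicity lemma is applied.
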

\begin{proof}
The degree of a vertex can not increase under $LOCC$ (see
Lemma \ref{degreefall}).
Also, $H_1$ and $H_2$ have same number of hyperedges. Therefore
the sum of degrees of all vertices in $H_1$ is equal to the sum of
the degrees of all vertices in $H_2$.
This enforces the degrees of all vertices to be same
for hypergraphs $H_1$ to $H_2$.
\end{proof}

In order to generalize the result as in Theorem \ref{EPRincom}
to $r$-uniform EC
hypergraphs, we
define the {\it degree of a vertex set}, generalizing the notion of
degree of a vertex.
\begin{definition}
For an EC hypergraph $H$ with vertex set $V$ and
subset $S\subseteq V$,
$deg_{H}(S)$ is defined as the number of hyperedges in
$H$ containing all the vertices of $S$.
\end{definition}

Observe that
$deg_{H}(S)$
is the (usual) degree of the vertex $v$
in $H$,
when $S=\{v\}$. The following lemma 
helps us in determining partial entropies, or equivalently, capacities of
cuts across bipartitions.
\begin{lemma}
For a subset $S$ of a $r$-uniform hypergraph $H$, the number of
hyperedges across the cut $(S, V\setminus S)$ is given by
\[
\sum_{F \subseteq S} (-1)^{|F|-1}deg_{H}(F)-\sum_{F \subseteq S, |F|=r} deg_{H}(F)
\]
\label{countcut}
\end{lemma}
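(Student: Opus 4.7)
The plan is to verify this identity by a direct application of the inclusion--exclusion principle, together with the $r$-uniformity assumption. The crucial observation is that a hyperedge $E$ of $H$ lies across the cut $(S, V \setminus S)$ if and only if $E \cap S \neq \emptyset$ and $E \not\subseteq S$. Consequently, if I denote by $\mathcal{E}_{\cap}(S)$ the set of hyperedges meeting $S$ and by $\mathcal{E}_{\subseteq}(S)$ the set of hyperedges contained entirely in $S$, then the number of crossing hyperedges equals $|\mathcal{E}_{\cap}(S)| - |\mathcal{E}_{\subseteq}(S)|$; the two terms on the right-hand side of the claimed formula will correspond to these two counts.

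First I would handle $|\mathcal{E}_{\cap}(S)|$. For each vertex $v \in S$, let $A_v$ be the collection of hyperedges of $H$ that contain $v$; then $\mathcal{E}_{\cap}(S) = \bigcup_{v \in S} A_v$. The standard inclusion--exclusion identity then gives
\begin{equation*}
|\mathcal{E}_{\cap}(S)| = \sum_{\emptyset \neq F \subseteq S} (-1)^{|F|-1} \left|\bigcap_{v \in F} A_v\right|.
\end{equation*}
By the definition of $\deg_{H}(F)$, the intersection $\bigcap_{v \in F} A_v$ consists of exactly the hyperedges containing every vertex of $F$, so $|\bigcap_{v \in F} A_v| = \deg_{H}(F)$. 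This reproduces the first sum of the statement (with the convention that the $F = \emptyset$ term is omitted, or equivalently that $\deg_{H}(\emptyset)$ is taken to vanish).

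For the second count $|\mathcal{E}_{\subseteq}(S)|$ I would invoke $r$-uniformity: every hyperedge $E \subseteq S$ has exactly $r$ elements, hence is itself an $r$-subset of $S$. Conversely, for any $r$-subset $F \subseteq S$, the quantity $\deg_{H}(F)$ is $1$ if $F$ is a hyperedge and $0$ otherwise, so summing $\deg_{H}(F)$ over all such $F$ counts $|\mathcal{E}_{\subseteq}(S)|$ exactly. Subtracting this sum from the previous expression yields the stated formula. No step presents a real obstacle; the only matters requiring some care are interpreting the $F = \emptyset$ term in the first sum and ensuring that the $r$-uniformity is correctly exploited to identify the second subtracted sum as precisely $|\mathcal{E}_{\subseteq}(S)|$.
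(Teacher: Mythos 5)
Your proof is correct and is essentially the paper's argument: both rest on inclusion--exclusion over the vertices of $S$, with the paper verifying the identity by tallying each hyperedge's contribution (computing $\sum_{i=1}^{t}(-1)^{i-1}\binom{t}{i}=1$) while you invoke the standard union formula for $\bigcup_{v\in S}A_v$ directly --- two presentations of the same computation. Your explicit remark that the $F=\emptyset$ term must be excluded (or $\deg_H(\emptyset)$ set to zero) is a worthwhile point of care that the paper leaves implicit.
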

\begin{proof}
Let $E$ be a hyperedge intersecting $S$ in $t\leq r$ vertices.
$E$ contributes to the first part of the sum above
through the terms  $(-1)^{|F|-1} deg_{H_i}(F)$, 
$\forall F \subseteq E \cap S$, by virtue of the {\it 
inclusion-exclusion principle}.
The contribution equals
(i) $1$, for the $t$ singleton
subsets $F \subseteq E \cap S$ with $|F|=1$, (ii) $-1$, for the
$\binom{t}{2}$ subsets $F\subseteq E \cap S$
with $|F|=2$, and so on, ending with $(-1)^{t-1}$, for the subset
$F=E \cap S$.
The total contribution of $E$ to the first part of the sum 
is $\displaystyle \sum_{i=1}^t (-1)^{i-1}\binom{t}{i}=1$.
The second part of the sum counts the number of hyperedges having all
vertices in $S$.
Hyperedge $E$ belongs to the cut  $(S, V\setminus S)$ if and only if
$0<t<r$. In this case $E$ contributes $+1$ to the first 
part and $0$ to the
second part of the sum, making a net contribution of $1$. 
For $t=r$ (and $t=0$)
the contribution of $E$ to both parts of the above sum
is 1 (and 0) respectively, making a net contribution of $0$.
Therefore, $\displaystyle \sum_{F \subseteq S} (-1)^{|F|-1}deg_{H}
(F)-\displaystyle\sum_{F \subseteq S, |F|=r} deg_{H}(F)$
equals the number of hyperedges across the cut  $(S, V\setminus S)$.
\end{proof}

We now proceed with the proof of the main LOCC incomparability
result for $r$-uniform EC hypergraphs using partial entropic 
criteria, for even integers $r\geq 4$.
Later, we present an example of two LOCC incomparable 
$3$-uniform EC hypergraphs
with equal number of hyperedges, where the partial entropic criteria
do not help in deciding incomparability. 

\begin{theorem}
\label{T1}
Let $H_1$ and $H_2$ be any two labeled {\it
$r$-uniform  EC hypergraphs} defined on the set $V$ of 
vertices. If $H_1$ and $H_2$ have the same number of hyperedges and
either $H_1\leq H_2$ or $H_2\leq H_1$
then (i) 
$deg_{H_1}(S)=deg_{H_2}(S)$, 
$\forall \displaystyle S \subseteq V$ 
such that $|S|< r$, 
for all integers $r\geq 3$, and
(ii) $H_1=H_2$, for all even integers $r\geq 4$.
\end{theorem}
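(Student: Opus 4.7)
The plan is to prove part (i) by induction on $k = |S|$, using Lemma \ref{countcut} to express the capacity of the cut $(S, V \setminus S)$ purely in terms of the degrees $\deg_{H_i}(F)$ for $F \subseteq S$, and then to combine Observation \ref{obscut} with a global counting identity to force equality of these degrees.

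Assume without loss of generality that $H_1 \geq H_2$. The base case $k=1$ is exactly Lemma \ref{l3}. For the inductive step, suppose the conclusion holds for all subsets of size less than $k$, where $2 \leq k < r$. Since $k < r$, the second sum in Lemma \ref{countcut} is empty, so the capacity of the cut $(S, V \setminus S)$ in $H_i$ reduces to
\[
c_i(S) = \sum_{\emptyset \neq F \subseteq S} (-1)^{|F|-1}\, \deg_{H_i}(F).
\]
By the inductive hypothesis, every term with $F \subsetneq S$ agrees for $H_1$ and $H_2$, so $c_1(S) - c_2(S) = (-1)^{k-1}\bigl(\deg_{H_1}(S) - \deg_{H_2}(S)\bigr)$. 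Observation \ref{obscut} gives $c_1(S) \geq c_2(S)$, yielding a one-sided inequality whose sign depends on the parity of $k$. To upgrade it to equality I sum over all $k$-subsets: a standard double-counting argument shows $\sum_{|S|=k} \deg_H(S) = m\binom{r}{k}$, which is the same for $H_1$ and $H_2$ because they share the same hyperedge count $m$. A family of one-sided inequalities with a fixed sign whose two sides sum to the same total must be equalities term by term, so $\deg_{H_1}(S) = \deg_{H_2}(S)$ for every $S$ with $|S| = k$, completing the induction.

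For part (ii) I extend the same analysis to $|S| = r$. Now the second sum in Lemma \ref{countcut} contributes $\deg_{H_i}(S) \in \{0,1\}$, and combining it with the $F=S$ term of the first sum produces a coefficient $(-1)^{r-1} - 1$ on $\deg_{H_i}(S)$ in the expression for $c_i(S)$. When $r$ is even this coefficient equals $-2$, so Observation \ref{obscut} together with part (i) yields $\deg_{H_1}(S) \leq \deg_{H_2}(S)$ for every $r$-subset $S$; the analogous identity $\sum_{|S|=r} \deg_H(S) = m$ then forces pointwise equality, which says $H_1$ and $H_2$ have identical hyperedge sets. When $r$ is odd the coefficient collapses to $0$, so the cut inequality provides no information at $|S| = r$; this is precisely the barrier that blocks extending (ii) to odd $r$ via partial entropic methods, consistent with the open conjecture flagged in the introduction.

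The main obstacle is that a single cut only produces a one-sided inequality whose direction alternates with the parity of the subset size; the crucial device that rescues the argument is the global identity $\sum_{|S|=k}\deg_H(S) = m\binom{r}{k}$, which converts the sign-dependent inequalities into equalities by a sandwich argument. Once this observation is in hand, both the inductive extraction of subset degrees in (i) and the final extraction of $H_1 = H_2$ in the even case of (ii) reduce to routine bookkeeping with Lemma \ref{countcut}.
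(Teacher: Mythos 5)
Your proof is correct and follows essentially the same route as the paper's: induction on $|S|$ using Lemma \ref{countcut} to express cut capacities in terms of subset degrees, cancellation via the inductive hypothesis to isolate a one-sided inequality on $\deg_{H_i}(S)$, and the global identity $\sum_{|S|=k}\deg_H(S)=m\binom{r}{k}$ to upgrade the family of same-signed inequalities to termwise equalities. Your identification of the vanishing coefficient $(-1)^{r-1}-1$ at $|S|=r$ for odd $r$ as the precise obstruction matches the paper's discussion as well.
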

\begin{proof}
We assume without loss of generality 
that $H_1$ is LOCC transformable to $H_2$.
It is sufficient to establish the following two claims.

\noindent \textbf{Claim (i):}
$\forall \displaystyle S \subseteq V$ 
such that $|S|< r$, $deg_{H_1}(S)=deg_{H_2}(S)$, for all 
integers $r\geq 3$.

\noindent \textbf{Claim (ii):}
$\forall \displaystyle S \subseteq V$ 
such that $|S|= r$, $deg_{H_1}(S)=deg_{H_2}(S)$, only for 
even integers $r\geq 4$.
[For sets $S \subseteq V$ with $|S|=r$,
$deg_{H_i}(S)=1$ if there is a hyperedge in $H_i$ consisting of
vertices in $S$, and $deg_{H_i}(S)=0$, otherwise.
Therefore, establishing the equality of $deg_{H_1}(S)$
and $deg_{H_2}(S)$ for
all subsets $S\subseteq V$ with $|S|=r$ implies
that the two hypergraphs are identical.]
Claim (i) is established by induction on the cardinality $k$ of
the $S\subseteq V$; Claim (ii) is subsequently established for 
even integers $r\geq 4$.

\noindent {\bf Proof of Claim (i):}

\noindent \textbf{Basis step:}
For $k=1$ the claim holds by Lemma \ref{l3}; if two $r$-uniform
EC hypergraphs with the same number of hyperedges
and vertices are not incomparable then they have the same vertex degrees.
In other words,
$deg_{H_1}(S)=deg_{H_2}(S)$, where
$S=\{v\}$, $\forall v \in V$.

\noindent \textbf{Induction hypothesis:}
Assume that
$deg_{H_1}(S)=deg_{H_2}(S)$,
for all $S\subseteq V$ such that
$|S|=k$, for all $k\leq m<r-1$.

\noindent \textbf{Induction Step:}
We require to show that 
$deg_{H_1}(S)=deg_{H_2}(S)$, 
for all $S\subseteq V$ such that $|S|=m+1$.


Clearly, no hyperedge in either hypergraph
can have all its vertices in $S\subseteq V$ as $|S|=k\leq m+1<r$. 
Therefore by Lemma \ref{countcut},
the number of edges
across the cut $(S,V\setminus S)$ in
$H_i, i=\{1,2\}$, is given by:
\begin{eqnarray}
&&\sum_{F \subseteq S} (-1)^{|F|-1}deg_{H_i}(F)
\label{normalcount}
\end{eqnarray}
Since the cut capacity cannot increase under LOCC
it follows that
\begin{eqnarray}\label{E}
&&\sum_{F \subseteq S} (-1)^{|F|-1}deg_{H_1}(F) \geq \sum_{F
\subseteq S} (-1)^{|F|-1}deg_{H_2}(F)\nn\\
\end{eqnarray}
By the induction hypothesis, 
$deg_{H_1}(F)=deg_{H_2}(F)$, for all $F\subseteq V$ with $|F|\leq m$.
Canceling out subsets which
contribute equally to both sides of
equation (\ref{E}) we have
\begin{eqnarray}\label{e1}
(-1)^{|S|-1}deg_{H_1}(S)&\geq&(-1)^{|S|-1}deg_{H_2}(S).
\end{eqnarray}
for all  $S \subset V$ with $|S|=m+1$.
Since the hypergraphs are $r$-uniform,
\begin{eqnarray}\label{e2}
\sum_{S\subseteq V,|S|=(m+1)}
deg_{H_i}(S)=\binom{r}{m+1} |E(H_{i})|, i\in \{1,2\}
\end{eqnarray}
where $E(H_1)=E(H_2)$ is the number of hyperedges in each EC hypergraph.
Given a hyperedge, ${r\choose m+1}$ subsets of the vertices
in that hyperedge would contribute 1 to the sum of the left hand side.
So, the total contribution over all hyperedges is the number of hyperedges in
$H_i$ times ${r\choose m+1}$.
Summing up relation (\ref{e1}) over all $S\subseteq V$,
\begin{eqnarray}
\displaystyle \sum_{S\subseteq V,|S|=(m+1)} (-1)^m deg_{H_1}(S)\geq \nonumber
\displaystyle \sum_{S\subseteq V,|S|=(m+1)}(-1)^m deg_{H_2}(S)
\end{eqnarray}
By equation (\ref{e2}) both the sums are equal
and therefore equality holds in the relation
(\ref{e1}), for all $S\subseteq V$ with $|S|=m+1$.
Therefore,
\begin{eqnarray}
deg_{H_1}(S)=deg_{H_2}(S)
\end{eqnarray}
for all $S\subseteq V$ with $|S|=m+1<r$.
Claim (i) of this theorem therefore holds by induction.


\noindent {\bf Proof of Claim (ii):}

Here we have $S\subseteq V$, such that $|S|=r$.
By Lemma \ref{countcut}, the number of 
edges across the cut $(S, V \setminus S)$
is given by:
\begin{eqnarray}
&&(\sum_{F \subseteq S} (-1)^{|F|-1}deg_{H_i}(F))-deg_{H_i}(S)
\label{evencount}
\end{eqnarray}
Since the cut capacity
does not increase under LOCC, like inequality (\ref{E}), 
the following inequality holds
\begin{eqnarray}
\left(\sum_{F \subseteq S} (-1)^{|F|-1}deg_{H_1}(F)\right)-deg_{H_1}(S) 
\geq 
\left(\sum_{F \subseteq S}
(-1)^{|F|-1}deg_{H_2}(F)\right)-deg_{H_2}(S) \label{E'}
\end{eqnarray}
By virtue of the already established Claim (i) above, 
$deg_{H_1}(F)=deg_{H_2}(F)$, for all $F\subseteq V$ with $|F|\leq m\leq r-1$.
Canceling out subsets which contribute equally to both sides of
equation (\ref{E'}), we get 
\begin{eqnarray}\label{E1}
2(-1)^{|S|-1}deg_{H_1}(S)\geq 2(-1)^{|S|-1}deg_{H_2}(S)
\end{eqnarray}
for even integers $r$.
The multiple 2 appears since the last (negative) term
$(-1)^{r-1}deg_{H_i}(S)$ in the summation 
on each side of the inequality 
\ref{E'} adds up with the negative term
$-deg_{H_i}(S)$,
for even integers $r\geq 4$.
Using remaining arguments as in the proof of Claim (i), 
and the equation (\ref{E1}), we have
\begin{eqnarray}
deg_{H_1}(S)=deg_{H_2}(S)
\end{eqnarray}
for all $S\subseteq V$, $|S|=r$, only for even $r\geq 4$, thereby
establishing Claim (ii).

\end{proof}

For an $r$-uniform hypergraph $H$ with $r$ odd,
the hyperedges having all vertices in $S$ contribute
$(-1)^{r-1}=1$ to the first term and $(-1)$ to the second term
in Lemma \ref{countcut}, and therefore cancel out. Therefore,
the number of hyperedges across a cut $(S,V\setminus S)$ is given by
\[
\sum_{F \subseteq S, |F|<r} (-1)^{|F|-1}deg_{H}(F)
\]
For all cuts, cut capacities are determined by
the quantities $deg_{H}(F)$ for $F\subset V$, $|F|<r$. We now exhibit
two 3-uniform hypergraphs on 6 vertices having 4 edges such that all cut
capacities are same in both the hypergraphs.
\[
H_{1}=\{123\},\{156\},\{245\},\{346\}
\]
\[
H_{2}=\{456\},\{234\},\{136\},\{125\}
\]

It is easy to verify that $deg_{H_{1}}(F)=deg_{H_{2}}(F)$ for $F\subset V$
and $|F|< 3$. From the above argument it follows that $H_{1}$ and
$H_{2}$ cannot be shown to be incomparable by 
partial entropic characterizations. 

As $deg_{H_{1}}(F)=deg_{H_{2}}(F)$ 
for $F\subset V$ and $|F|< 3$ so they are isentropic \cite{bennett00}. 
And from \cite{bennett00} that isentropic states are 
either LU (locally unitary) equivalent or incomparable.
Partition the
vertices into three sets $A=\{1\}, B=\{2, 3\}, C=\{4, 5, 6\}$ and 
merge the vertices in the same sets. $H_{1}$ reduces to
EPR graph with edges $(A, B), (A, C)$ and two copies of $(B, C)$. 
$H_2$ reduces to EC graph with 2 GHZ shared between $A, B, C$
and an EPR pair shared between $B$ and $C$. Let the reduced 
graph of $H_1$ and $H_2$ be denoted by $R(H_1)$ and $R(H_2)$
respectively. If $H_1$ and $H_2$ are LU equivalent then 
so is $R(H_1)$ and $R(H_2)$. To prove that 
they are not LU equivalent, observe that the mixed 
state obtained by tracing out $B$ from $R(H_2)$ state
i.e , $\rho_{AC}(R(H_{2}))$ is a maximally mixed, 
separable state of $A$ and $C$, while the 
corresponding mixed state $\rho_{AC}(R(H_1))$ obtained from   
$R(H_{1})$ can be distilled to entangled state, consisting 
on intact $(A, C)$ EPR pair shared by the two parties. So
if $R(H_1)$ and $R(H_2)$ are LU equivalent, then $A$ and $C$
can do local unitary operations and convert $\rho_{AC}(R(H_{2}))$
to $\rho_{AC}(R(H_{1}))$. This is not possible as one cannot 
make entanglement by $LOCC$, $R(H_{2})$ and $R(H_{2})$ are not
LU equivalent then $H_1$ and $H_2$ are not LU equivalent, 
therefore by \cite{bennett00} they are LOCC incomparable.

Next we propose an alternative proof of the incomparability result 
of \cite{sin:pal:kum:sri} about 
distinct $r$-uniform EC hypertrees as follows. 

\begin{theorem}
Any two distinct ensembles of multipartite CAT states 
represented by $r$-uniform EC hypertrees are
LOCC incomparable.
\end{theorem}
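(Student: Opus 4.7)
The plan is to reduce the argument to Claim (i) of Theorem \ref{T1}, which applies for all integers $r \geq 3$ whenever the two hypergraphs have the same number of hyperedges, and then use acyclicity to argue that the hyperedge set is uniquely determined by the pair-degrees $\deg_H(\{u,v\})$. The case $r = 2$ is immediate from Theorem \ref{EPRincom}, since two distinct spanning trees on $n$ vertices each have $n - 1$ edges.

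For $r \geq 3$, both $H_1$ and $H_2$ are connected $r$-uniform hypertrees on the same $n$ vertices, so each contains exactly $(n-1)/(r-1)$ hyperedges, and the hypothesis of Theorem \ref{T1} is met. Suppose for contradiction that $H_1 \geq H_2$ via LOCC. Applying Claim (i) with $|S| = 2$ yields $\deg_{H_1}(\{u,v\}) = \deg_{H_2}(\{u,v\})$ for every pair $\{u, v\} \subseteq V$. Because two distinct hyperedges of a hypertree share at most one vertex, each pair-degree lies in $\{0, 1\}$, so $H_1$ and $H_2$ induce the same \emph{pair graph} $G$ on $V$, with $\{u, v\}$ an edge of $G$ iff some hyperedge contains both $u$ and $v$. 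It then suffices to prove the auxiliary fact that the hyperedges of an $r$-uniform EC hypertree coincide exactly with the $r$-cliques of its pair graph; given this, $H_1$ and $H_2$ have identical hyperedge sets, contradicting distinctness.

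The main obstacle is establishing this auxiliary fact. The nontrivial direction says every $r$-clique of $G$ is a hyperedge. Suppose $\{v_1, \ldots, v_r\}$ is an $r$-clique that is not a hyperedge, and let $E_{ij}$ denote the unique hyperedge containing $\{v_i, v_j\}$. Not all $E_{ij}$ can coincide, for otherwise the common hyperedge would equal the entire clique. Hence the star of pair-hyperedges at some $v_i$ contains two distinct members $E_{ij} \neq E_{ik}$; the hypertree property forces $E_{ij} \cap E_{ik} = \{v_i\}$, which then forces $E_{jk}$ (necessarily containing both $v_j$ and $v_k$) to differ from both. Finally, $\{v_i, E_{ij}, v_j, E_{jk}, v_k, E_{ik}, v_i\}$ is a hyperpath-cycle, contradicting the no-cycles property and completing the argument.
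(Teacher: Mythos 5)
Your proof is correct, and its overall strategy coincides with the paper's: verify that two connected $r$-uniform hypertrees on $n$ vertices have the same number $(n-1)/(r-1)$ of hyperedges (a point the paper leaves implicit for $r\ge 3$), invoke Claim (i) of Theorem \ref{T1}, and then derive a contradiction from the hypertree property that two hyperedges meet in at most one vertex. The endgame, however, is genuinely different. The paper splits into cases: for $r=3$ it uses the three pair-degrees inside a hyperedge of $T_1\setminus T_2$ to manufacture three distinct hyperedges of $T_2$ forming a cycle, while for $r>3$ it uses two $(r-1)$-subsets of such a hyperedge, whose host hyperedges in $T_2$ would have to intersect in at least $r-2\ge 2$ vertices. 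You instead work only with pair-degrees uniformly for all $r\ge 3$, note they are $0/1$-valued in a hypertree, and prove the standalone structural lemma that an $r$-uniform hypertree is determined by its pair graph (its hyperedges are exactly the $r$-cliques), using essentially the same cycle-of-three-hyperedges construction the paper deploys only in the $r=3$ case. Your route is uniform in $r$ and isolates a reusable combinatorial fact; the paper's $(r-1)$-subset trick is shorter for $r>3$. One step you should spell out: for $r\ge 4$ the two distinct hyperedges guaranteed by ``not all $E_{ij}$ coincide'' may a priori be $E_{ab}\ne E_{cd}$ with disjoint index pairs, so you need the one-line observation that if at every vertex $v_i$ all the $E_{ij}$ coincided, then chaining through shared indices would force all $E_{ij}$ to coincide; hence some vertex does carry two distinct members $E_{ij}\ne E_{ik}$. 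With that line added, the argument is complete.
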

\begin{proof}

Let the two hypertrees defined on the vertex set 
$\{1,2,\cdots,n\}$ be $T_{1}$ and $T_{2}$.
For $r=2$ the result follows from Theorem \ref{EPRincom}
as all trees on $n$ vertices have $n-1$ edges.

For $r=3$, we assume without loss of generality that the hyperedge
$\{1,2,3\}$ is in $T_{1}\setminus T_{2}$. If $T_{1}$ and 
$T_{2}$ are not LOCC incomparable, then by part (i) in Theorem
\ref{T1}, we have $deg_{T_{1}}(\{1,2\})
=deg_{T_{2}}(\{1,2\})$. So, there should be a hyperedge 
$E1=\{1,2,x\}$ in $T_{2}\setminus T_{1}$, where $x$ is not in
$\{1,2,3\}$; this hyperedge $E_1$ cannot be in $T_1$ because no 
hypertree has two hyperedges with two common vertices. 
Similarly, 
$T_2$ must have hyperedges $E2=\{1,3,y\}$ and $E3=\{2,3,z\}$, where 
$y$ is not in $\{1,2,3,x\}$ and $z$ is not in $\{1,2,3,x,y\}$.  This 
implies that the cycle $\{3,E2,1,E1,2,E3,3\}$ is present in $T_{2}$,
a contradiction.

For $r>3$, we assume without loss of generality that the hyperedge
$\{1,2,\dots r\}$ is in $T_{1}\setminus T_{2}$.  
If $T_{1}$ and 
$T_{2}$ are not LOCC incomparable, then by part (i) of 
Theorem \ref{T1}, we have $deg_{T_{1}}(\{1,2,\dots,r-1\})
=deg_{T_{2}}(\{1,2,\dots,r-1\})$ and  $deg_{T_{1}}(\{2,3,\dots,r\})
=deg_{T_{2}}(\{2,3,\dots,r\})$. So, there must be hyperedges in
$T_{2}$ containing $\{1,2,\dots,r-1\}$ and 
$\{2,3,\dots,r\}$. As $r>3$, these two hyperedges 
intersect in at least 2 vertices, a contradiction.
\end{proof}

\section{Partial order induced by LOCC and its width}
\label{widthsec}

In this section we define a {\it partial order} where each 
node represents an equivalence class of states that are mutually LU
equivalent. This partial order is called the {\it LOCC partial 
order} and is represented by a {\it directed acyclic graph} $G_{LOCC}$.
We define this directed acyclic graph {\it $G_{LOCC}(V,E)$} as follows. 
Each vertex or node $v\in V$
represents an equivalence class of states that are mutually 
LU equivalent. The directed edge $(v,w)\in E$ (directed from
$v$ to $w$) exists in $G_{LOCC}(V,E)$
if any state in the
equivalence class of $v\in V$ can be transformed by LOCC to
a state in the equivalence class $w\in V$.
[By this definition of a directed edge, there is a
self loop in every node.]
We denote this partial order by the (binary) 
relation $\geq_{LOCC}$ between the nodes of the directed
graph $G_{LOCC}$;
for $X,Y\in V$, $X \geq_{LOCC} Y$, if and only if there 
is a directed edge from X to Y in
$G_{LOCC}$.
\begin{lemma}
The directed graph $G_{LOCC}(V,E)$ representing the LOCC partial order
is a {\it transitive graph} i.e., if
there is a directed edge $(v,w)\in E$, and a directed 
edge $(w,z)\in E$,
then there is also a directed 
edge $(v,z)\in E$.
\label{trans}
\end{lemma}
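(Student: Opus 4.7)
The plan is to prove transitivity by explicit composition of LOCC protocols, exploiting the fact that local unitaries are free within the LOCC paradigm. First I would unpack the hypotheses. The edge $(v,w)\in E$ supplies representatives $|\psi_{v}\rangle$ in the equivalence class $v$ and $|\psi_{w}\rangle$ in the equivalence class $w$, together with an LOCC protocol $\Pi_{1}$ that converts $|\psi_{v}\rangle$ into $|\psi_{w}\rangle$. The edge $(w,z)\in E$ supplies representatives $|\phi_{w}\rangle$ in class $w$ and $|\phi_{z}\rangle$ in class $z$, together with an LOCC protocol $\Pi_{2}$ that converts $|\phi_{w}\rangle$ into $|\phi_{z}\rangle$. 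In general the intermediate states $|\psi_{w}\rangle$ and $|\phi_{w}\rangle$ need not be the same vector; they are only guaranteed to belong to the same equivalence class.

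Next, since $|\psi_{w}\rangle$ and $|\phi_{w}\rangle$ both lie in the class $w$, by the very definition of the nodes of $G_{LOCC}$ they are LU-equivalent. Hence there is a local unitary $U=U_{1}\otimes U_{2}\otimes\cdots\otimes U_{n}$, with one tensor factor acting on each party's local Hilbert space, such that $U|\psi_{w}\rangle=|\phi_{w}\rangle$. Such a $U$ is by itself a trivial LOCC operation: each party independently applies its own factor on its local share, requiring no classical communication at all.

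I would then compose the three stages in the obvious order. Starting from $|\psi_{v}\rangle$, apply $\Pi_{1}$ to reach $|\psi_{w}\rangle$; apply $U$ to reach $|\phi_{w}\rangle$; finally apply $\Pi_{2}$ to reach $|\phi_{z}\rangle$. The concatenation $\Pi_{2}\circ U\circ\Pi_{1}$ is itself a valid LOCC protocol, because the class of LOCC operations is closed under sequential composition: the classical measurement record produced in one round can be broadcast and used as classical side information for subsequent rounds. This exhibits $|\psi_{v}\rangle$, a representative of class $v$, being LOCC-converted into $|\phi_{z}\rangle$, a representative of class $z$, and therefore $(v,z)\in E$ by the definition of $G_{LOCC}$.

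The only subtlety, and the reason the lemma is not immediate from the definition, is the mismatch between the intermediate representatives produced by $\Pi_{1}$ and consumed by $\Pi_{2}$; the local-unitary bridge is precisely what closes this gap, and its availability rests on the fact that LU operations lie inside LOCC. I do not anticipate any deeper obstacle; the lemma essentially records the consistency of the definition of $\geq_{LOCC}$ as a partial order on LU-equivalence classes, with reflexivity already noted in the paper via the self-loops at each node.
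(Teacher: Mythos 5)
Your proof is correct and follows essentially the same route as the paper's: compose the two LOCC protocols in sequence to carry a representative of $v$ to a representative of $z$. You are in fact more careful than the paper, whose proof tacitly assumes the state produced by the first protocol is the very state consumed by the second; your local-unitary bridge between the two representatives of class $w$ (legitimate because LU operations are themselves LOCC) closes exactly that gap.
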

\begin{proof}
An edge from $v$ to $w$ implies that there exists an LOCC protocol
transforming a state $s\in v$ to an state $t\in w$. The directed edge
$(w,z)$ implies that
exists another LOCC protocol which converts the state $t\in w$ to a
state $u\in z$.
Applying the two protocols in succession,
$s$ can be converted to $u$, enforcing the 
directed edge $(w,z)$.
\end{proof}

\begin{corollary}
The graph $G_{LOCC}(V,E)$ has no non-trivial cycles.
\end{corollary}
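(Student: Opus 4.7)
The plan is to derive the corollary directly from the transitivity established in Lemma \ref{trans} together with the fact that the nodes of $G_{LOCC}$ are defined as equivalence classes modulo LU equivalence. Recall that a non-trivial cycle is a closed directed walk $v_1 \to v_2 \to \cdots \to v_k \to v_1$ with $k \geq 2$ in which the $v_i$ are not all the same node; self-loops are explicitly allowed by the definition and are the only trivial cycles.

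First I would assume, for contradiction, the existence of such a non-trivial cycle through distinct nodes $v_1$ and $v_j$ (for some $j \neq 1$). Applying Lemma \ref{trans} repeatedly along the cycle, I would conclude that $v_1 \geq_{LOCC} v_j$ and $v_j \geq_{LOCC} v_1$ simultaneously. Unfolding the definition of the directed edge relation, this means there exist states $s \in v_1$, $t \in v_j$, and LOCC protocols $\mathcal{P}_1$ transforming $s$ to $t$ and $\mathcal{P}_2$ transforming $t$ to $s$ (possibly after switching representatives within the equivalence classes, which is permissible since any two representatives of a class are LU equivalent and LU operations are themselves LOCC).

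The key step is then to invoke the standard fact (used throughout the paper and attributed to Bennett et al.\ \cite{bennett00}) that two multipartite states which are mutually LOCC-convertible must in fact be LU equivalent; entanglement cannot be created by LOCC, so two-way LOCC convertibility can only occur between states already related by a local unitary. Hence $s$ and $t$ are LU equivalent, which means they belong to the same equivalence class. This contradicts $v_1 \neq v_j$, and so no non-trivial cycle can exist.

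The only delicate point in this plan is justifying the invocation of mutual-LOCC-convertibility implying LU equivalence in the multipartite setting; this is the conceptual crux, but it is already an accepted input from \cite{bennett00} and is the reason the vertices of $G_{LOCC}$ were defined as LU equivalence classes in the first place. Once this is granted, the argument is essentially a one-line consequence of Lemma \ref{trans}.
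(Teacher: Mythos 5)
Your proof is correct and matches the argument the paper intends: the paper states this corollary without an explicit proof, but its definition of nodes as LU equivalence classes and its later remark on antisymmetry rely on exactly your chain of reasoning (transitivity collapses a cycle to two mutually LOCC-convertible nodes, mutual convertibility forces LU equivalence by \cite{bennett00}, hence the nodes coincide). Your added care about switching representatives within a class via local unitaries is a legitimate and welcome detail that the paper leaves implicit.
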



It can be shown that multipartite quantum states form a partial order
under LOCC transformations (see \cite{pal:kum:sri}). 
Further the LOCC equivalent classes of quantum states
also form a partial order
under the relation $\geq_{LOCC}$ as defined above.
The relation $\geq_{LOCC}$ is a partial order as it satisfies the
three properties:
\begin{enumerate}
\item
The relation is reflexive since for all nodes of
$G_{LOCC}$ $X \geq_{LOCC} X$; each node
of $G_{LOCC}$ has a self loop.
\item
The relation is transitive, i.e., if $X \geq_{LOCC} Y$ and $Y \geq_{LOCC}
Z$
then $X \geq_{LOCC} Z$, as already shown earlier in Lemma \ref{trans}.
\item
The relation is antisymmetric. Since if $X \geq_{LOCC} Y$ and $Y
\geq_{LOCC} X$,
then $X$ is identical to $Y$ since there cannot be cycles
in $G_{LOCC}$ except for self
loops, as shown earlier.
\end{enumerate}
\begin{lemma}
The relation $\geq_{LOCC}$ among the nodes of the graph $G_{LOCC}$ forms 
a partial order.
\end{lemma}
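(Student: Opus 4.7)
The plan is simply to verify the three defining axioms of a partial order---reflexivity, transitivity, and antisymmetry---on the relation $\geq_{LOCC}$. In fact, the three enumerated items immediately preceding the lemma statement already supply each of these, so the proof is really a matter of assembling them into a single coherent argument rather than proving anything new. I would phrase the lemma's proof as a short bulleted (via a LaTeX \texttt{enumerate}) verification pointing back to the appropriate earlier facts.

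First, I would address reflexivity: by the definition of $G_{LOCC}$, every equivalence class $X$ of LU-equivalent states can be transformed to itself by the trivial (identity) LOCC protocol, so there is a self-loop at every node and $X \geq_{LOCC} X$ for every $X$. Next, transitivity is exactly the content of Lemma \ref{trans}: if $X \geq_{LOCC} Y$ and $Y \geq_{LOCC} Z$, concatenating the two LOCC protocols gives an LOCC transformation from a representative of $X$ to a representative of $Z$, hence $X \geq_{LOCC} Z$.

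Finally, for antisymmetry, suppose $X \geq_{LOCC} Y$ and $Y \geq_{LOCC} X$. Then in $G_{LOCC}$ there are directed edges $(X,Y)$ and $(Y,X)$. If $X \neq Y$, these two edges form a non-trivial (length-two) directed cycle, contradicting the corollary to Lemma \ref{trans} that $G_{LOCC}$ has no non-trivial cycles. Therefore $X = Y$, which is precisely antisymmetry.

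The only step that requires any thought is antisymmetry, and even that is immediate from the corollary stating that $G_{LOCC}$ has no non-trivial cycles; the conceptual content of that corollary is itself a direct consequence of transitivity plus the fact that LU-equivalence classes are identified as single nodes, so two mutually convertible classes must in fact coincide. Since all three properties are essentially already in place, I anticipate no genuine obstacle---the main care is just in making the quantifications over representative states versus equivalence classes consistent, so that ``$X \geq_{LOCC} Y$'' is unambiguously a relation between classes rather than states.
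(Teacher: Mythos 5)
Your proof is correct and follows essentially the same route as the paper: the paper's argument is precisely the enumerated verification of reflexivity (self-loops), transitivity (via Lemma \ref{trans}), and antisymmetry (via the corollary that $G_{LOCC}$ has no non-trivial cycles) that you reproduce. No substantive difference.
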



The $width$ of $\geq_{LOCC}$ is the maximum number of
mutually LOCC incomparable
EC hypergraphs in $\geq_{LOCC}$. 
The width of $\geq_{LOCC}$ can be obtained using Theorem \ref{T1} and 
Sperner's Theorem \cite{sperner}.

\begin{Sperner}
The maximum cardinality of a collection of subsets of a 
$n$ element set, none of which contains another is 
$\binom{n}{\lfloor{n/2}\rfloor}$.
\end{Sperner}
 
Now we provide the derivation of the $width$ of the partial order
for $r$-uniform EC hypergraphs using Sperner's Theorem, 
where $r$ is an even integer.
\begin{theorem}
The maximum number of mutually LOCC incomparable $r$-uniform EC
hypergraphs (for even $r$), with $n$ nodes is $\binom{M}{\lfloor M/2
\rfloor}$, where $M=\binom{n}{r}$. 
\label{hygw}
\end{theorem}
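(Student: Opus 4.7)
The plan is to view each labeled $r$-uniform EC hypergraph on the fixed vertex set $V$ with $|V|=n$ as a subset of the universe $U$ of all possible $r$-element subsets of $V$, where $|U|=M=\binom{n}{r}$. Under this identification, $r$-uniform EC hypergraphs on $V$ correspond bijectively to subsets of $U$, so that the hyperedge set $E(H)$ is just a point of $2^{U}$. The main structural fact I would use is that the LOCC partial order on these hypergraphs refines the subset lattice of $U$: if $E(H_1)\subseteq E(H_2)$, then $H_2\geq_{LOCC}H_1$, because the parties can discard the extra CAT states corresponding to $E(H_2)\setminus E(H_1)$ by local measurements (no classical communication is even needed for this step).

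For the upper bound, suppose $\mathcal{F}$ is a collection of pairwise LOCC incomparable $r$-uniform EC hypergraphs on $V$. For any two distinct $H_1,H_2\in\mathcal{F}$, neither $E(H_1)\subseteq E(H_2)$ nor $E(H_2)\subseteq E(H_1)$ can hold, because either containment would give an LOCC transformation in one direction by the observation above, contradicting incomparability. Hence $\{E(H):H\in\mathcal{F}\}$ is an antichain in the Boolean lattice on the $M$-element set $U$. Sperner's theorem then yields $|\mathcal{F}|\leq\binom{M}{\lfloor M/2\rfloor}$.

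For the matching lower bound, I would exhibit the family $\mathcal{F}^{\ast}$ of all $r$-uniform EC hypergraphs on $V$ whose hyperedge sets are exactly the $\lfloor M/2\rfloor$-element subsets of $U$. Clearly $|\mathcal{F}^{\ast}|=\binom{M}{\lfloor M/2\rfloor}$, and every member of $\mathcal{F}^{\ast}$ has the same number of hyperedges. By Theorem \ref{T1} (for even $r\geq 4$) together with Theorem \ref{EPRincom} (for $r=2$), any two distinct such hypergraphs are LOCC incomparable, giving the required lower bound.

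The only point that requires a touch of care is the directional implication $E(H_1)\subseteq E(H_2)\Rightarrow H_2\geq_{LOCC}H_1$, which is where the evenness of $r$ plays no role; this is a standard ``discard-by-local-measurement'' observation and I would state it explicitly rather than leave it implicit. Beyond that, the proof is a clean two-sided bound: the upper bound is a direct appeal to Sperner's theorem, and the lower bound is a direct appeal to Theorem \ref{T1}, so no substantial obstacle is anticipated.
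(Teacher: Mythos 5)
Your proposal is correct and follows essentially the same route as the paper: the upper bound via Sperner's theorem applied to the antichain of hyperedge sets (using that containment of hyperedge sets implies LOCC comparability by discarding extra CAT states), and the lower bound via the family of all hypergraphs with exactly $\lfloor M/2\rfloor$ hyperedges, which are pairwise incomparable by Theorem \ref{T1}. Your explicit handling of the $r=2$ case via Theorem \ref{EPRincom} and your care about the direction of the containment implication are small improvements in precision over the paper's write-up, but not a different argument.
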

\begin{proof}
Let $r>3$ be an even number. 
The maximum number of hyperedges possible in an $r$-uniform EC
hypergraph 
with $n$ vertices is $M$= $\binom{n}{r}$. 
Let $S$ be any set of $N> \binom{M}{\lfloor M/2 \rfloor}$
mutually LOCC incomparable distinct $r$-uniform EC hypergraphs 
defined on $n$ vertices.
By Sperner's Theorem, there must be two hypergraphs $H_1$ and $H_2$
in the collection
$S$ such that the set of hyperedges in $H_1$ is a subset of the
set of hyperedges in $H_2$. This contradicts the assumption that 
$H_1$ and $H_2$ are LOCC incomparable because $H_1\geq_{LOCC}H_2$ by
a simple LOCC transformation that drops all additional hyperedges 
in $H_1\setminus H_2$ from $H_1$.
So, we know that $N\leq \binom{M}{\lfloor M/2 \rfloor}$, is an upper
bound on the width of the partial order $\geq_{LOCC}$.

Now we show that this bound is tightly achievable as follows.
Consider the set of all the different
$r$-uniform EC hypergraphs on $n$ vertices with exactly a fixed number 
$\lfloor \frac M 2\rfloor$ of 
hyperedges. By Theorem \ref{T1}, all these EC hypergraphs 
are LOCC incomparable. 
This set has $\binom{M}{\lfloor M/2 \rfloor}$
$r$-uniform EC hypergraphs,
forming an {\it antichain} of the partial order $\geq_{LOCC}$. 
Therefore the width of the partial order $\geq_{LOCC}$
is $\binom{M}{\lfloor M/2 \rfloor}$ where
$M=\binom{n}{r}$.
\end{proof}
 
\section{Restriction of LOCC to teleportation and EPR destruction}

\label{cutssec}



Now we consider LOCC restricted to the two
basic operations of {\it edge destruction}
and {\it teleportation} on EPR graphs. The allowed
operations are (i) discarding an edge
(destroying an EPR pair) and
(ii) teleportation, i.e., replacing
edges (EPR pairs)  $\{x, y\}$ and $\{y, z\}$
by one EPR pair across
edge $\{x, z\}$.
If EPR graphs $G$ and $H$ are such that $H$ can be obtained from  
$G$ by such restricted LOCC, then we say $G\geq_R H$.
In the following lemma we characterize graph theoretic 
properties such that $G\geq_R H$, for EPR graphs $G$ and $H$.

\begin{lemma}
\label{disjointpaths}
Let $G$ and $H$ be
two EPR graphs defined on the same vertex set $V$. Then,
$G\geq_R H$ if and only if there
are edge disjoint paths in $G$ from $u$ to $v$ 
for all edges $\{u,v\}$ in $H$.
\end{lemma}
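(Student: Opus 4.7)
The plan is to prove the two directions separately: sufficiency by an explicit algorithmic simulation, and necessity by induction on the length of the LOCC sequence.

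For sufficiency ($\Leftarrow$), assume we are given a family $\{P_e\}_{e\in E(H)}$ of pairwise edge-disjoint paths in $G$, where $P_e$ joins the endpoints of $e$. I would realise each $P_e = w_0 w_1 \cdots w_k$ by $k-1$ successive teleportations: fuse $\{w_0,w_1\}$ and $\{w_1,w_2\}$ into $\{w_0,w_2\}$, then fuse $\{w_0,w_2\}$ with $\{w_2,w_3\}$, and so on, consuming exactly the edges of $P_e$ and leaving one EPR pair across $e$. Edge-disjointness of the $P_e$'s ensures that the teleportations performed along different paths never compete for the same edge (they may share intermediate vertices, which is harmless because teleportation only consumes edges). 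Once all edges of $H$ have been produced this way, any leftover edges of $G$ are discarded by edge destruction, which yields $G \geq_R H$.

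For necessity ($\Rightarrow$), let $G = G_0 \to G_1 \to \cdots \to G_t = H$ be a sequence of allowed operations. I would establish the stronger statement by induction on $t$: for every $i$ there is a family of pairwise edge-disjoint paths in $G$, one for each edge of $G_i$, connecting its endpoints. The base case $i=0$ is immediate since each edge of $G$ is its own length-one path. For the inductive step, an edge-destruction operation simply drops the corresponding path from the family. A teleportation that replaces $\{x,y\}$ and $\{y,z\}$ by $\{x,z\}$ is handled by concatenating $P_{\{x,y\}}$ with $P_{\{y,z\}}$ into an $x$--$z$ walk and shortcutting repeated vertices to obtain a simple $x$--$z$ path $P_{\{x,z\}}$; this new path uses only edges of $P_{\{x,y\}}\cup P_{\{y,z\}}$, so edge-disjointness with the paths assigned to all other edges of $G_{i+1}$ is preserved. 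Applying this at $i=t$ produces the required family of edge-disjoint paths in $G$ for the edges of $H$.

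The main subtlety to keep track of is that a teleportation may manufacture a duplicate of an edge already present elsewhere in $G_i$, which forces us to regard the intermediate EPR graphs as edge-multisets; with this convention the induction carries a distinct path for each edge counted with multiplicity, and no path is inadvertently reused. The accompanying minor point is that the standard shortcutting of a walk to a simple path with the same endpoints can only shrink the edge set used, so edge-disjointness with the other paths in the family is never destroyed in the inductive step.
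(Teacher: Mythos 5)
Your proof is correct, and the sufficiency direction is exactly the paper's argument (collapse each path by successive teleportations, then discard leftovers). For necessity the two arguments maintain the same invariant --- a family of pairwise edge-disjoint paths witnessing the edges of $H$ --- but run it in opposite directions: the paper reconstructs $G$ from $H$ via the \emph{inverse} operations (adding an edge; replacing an edge by a length-2 path) and tracks edge-disjoint paths inside each intermediate graph, whereas you induct forward along the sequence $G=G_0\to\cdots\to G_t=H$ and always keep the paths inside the fixed graph $G$, assigned to the edges of the current $G_i$. The forward version buys you two things the paper glosses over: you explicitly shortcut the concatenation $P_{\{x,y\}}\ast P_{\{y,z\}}$ to a simple $x$--$z$ path (noting this only shrinks the edge set used, so disjointness from the other paths survives), and you handle the case where teleportation manufactures a duplicate of an edge already present by treating intermediate graphs as edge-multisets with one path per copy. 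The paper's backward argument is shorter but leaves both of these points implicit; your version is the more airtight of the two.
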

\begin{proof}
For the if part, observe that a
path $P$ from $u$ to $v$ in $G$ can be
reduced to the EPR edge \{$u$, $v$\} using
successive steps of teleportation.
\begin{eqnarray}
u v_1 v_2 v_3 \dots v_k v \rightarrow
u v_1 v_3 \dots v_k v \rightarrow 
\dots \rightarrow u v
\end{eqnarray}
Therefore edge disjoint paths from $u$ to $v$ can be
converted independently to edges 
\{$u$,$v$\} in $H$.
The remaining edges of $G$ on this path are
discarded to transform $G$ to $H$.

For the only if part,
consider the inverse of the two
possible operations
\begin{enumerate}
\item
Adding an edge
\item
Replacing an edge by a path of length $2$.
\end{enumerate}
If $G\geq_R H$, we can create $G$ from $H$ 
using the inverse operations as follows.
As we keep applying these inverse operation steps for constructing $G$
from $H$, the invariant maintained is the presence of  
at least $|E(H)|$ edge-disjoint
paths in the intermediate graphs. 
Initially we have one path
in $H$ from $u$ to $v$ for each edge \{$u$, $v$\} in $H$. 
These edges themselves are the initial edge disjoint paths 
to begin with.
Since the application of the first inverse operation does not destroy any 
edge, we need to consider only the second operation.
During the application of the second inverse operation, one edge
of a path may be destroyed; however, two edges are added to reconnect the
path thereby preserving the edge disjointness property of all the 
relevant paths.
Therefore $G$ contains
edge disjoint paths from $u$ to $v$, for each $\{u,v\}\in E(H)$.
\end{proof}

As shown earlier, partial entropic criteria are
not applicable for establishing $LOCC$ incomparability
of certain kinds of multipartite states.
The new criterion of the existence of
edge disjoint paths in the EPR graph $G$
for each edge in $H$, provides a stronger
characterization of proving $LOCC$ incomparability in
the restricted model.

\begin{figure}[ht]
\rotatebox{0}{\centerline{\epsfig{figure=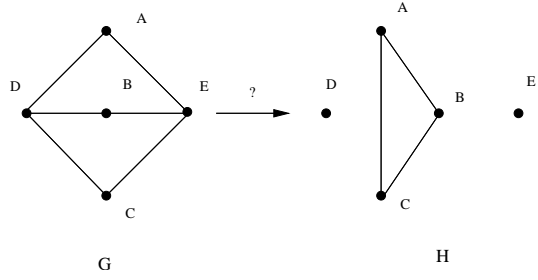,width=70mm}}}
\caption{Two EPR graphs incomparable under restricted LOCC, not
yielding to partial entropic criteria.} \label{figure1}
\end{figure}

Consider two EPR graphs $G$ and $H$ in Figure \ref{figure1}.
Since $G$ has more edges than $H$, $H$ cannot be transformed to $G$
by LOCC. So, in order to show that 
EPR graphs $G$ and $H$ are LOCC incomparable, we need to only show that 
$G\geq H$ does not hold.
Observe that using the bicolored merging technique cannot help us
establish that $G\geq H$ does not hold.  
This is due to the fact that no bicoloring of
the vertex set results in any violation of the non-increase
of partial entropy as we go from $G$ to $H$. 
Observe however that $G\geq_R H$ does not hold by Lemma \ref{disjointpaths}
since $H$ contains three new edges
$\{A,B\}$, $\{B,C\}$, $\{C,A\}$, but $G$ does not
contain edge disjoint paths from $A$ to $B$, $B$ to $C$, and $C$
to $D$. 
So, the two graphs are $LOCC$ incomparable under our restricted $LOCC$ model. 
A natural open question is whether the two EPR graphs $G$ and $H$
are LOCC incomparable in the general model. 
We conjecture that they are indeed LOCC incomparable.

We now investigate whether the edge disjoint path
criterion is powerful enough to capture LOCC. 
We show that this is indeed the case when 
edges in $E(H)\setminus E(G)$
appear one at a time.

\begin{definition}
An LOCC transformation from $G$ to $H$ is called good
if $|E(H)\setminus E(G)|\leq 1$, where $G$ and $H$ are EPR graphs defined 
on the same vertex set $V$.
\end{definition}

\begin{lemma}
\label{good}
Suppose EPR graph $G$ can be transformed to 
$H$ via a good transformation. Then, 
$G\geq_R H$.
\end{lemma}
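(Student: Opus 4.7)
The plan is to apply Lemma \ref{disjointpaths}, which reduces $G \geq_R H$ to exhibiting a family of pairwise edge-disjoint paths in $G$, one from $u$ to $v$ for every edge $\{u,v\} \in E(H)$. I would split into two cases based on $|E(H) \setminus E(G)|$.

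If $|E(H) \setminus E(G)| = 0$, then $E(H) \subseteq E(G)$, so for every edge $\{u,v\} \in E(H)$ the single edge $\{u,v\}$ in $G$ serves as a length-$1$ path; these paths are trivially pairwise edge-disjoint, and Lemma \ref{disjointpaths} gives $G \geq_R H$.

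If $|E(H) \setminus E(G)| = 1$, let $e = \{u,v\}$ be the unique new edge. For every $\{x,y\} \in E(H) \cap E(G)$ I would again use $\{x,y\}$ itself as the path. The only remaining task is to produce a $u$-to-$v$ path in $G$ that uses only edges of $E(G) \setminus E(H)$, so that it is automatically edge-disjoint from all the length-$1$ paths above. By the edge version of Menger's theorem, such a path exists if and only if for every bipartition $(A,B)$ of $V$ with $u \in A$ and $v \in B$, at least one edge of $E(G) \setminus E(H)$ crosses the cut. Suppose, for contradiction, that some bipartition $(A,B)$ has every edge of $G$ across it already lying in $H$. Then the cut capacity in $H$ equals the cut capacity in $G \cap H$ across $(A,B)$ plus $1$ (the extra $+1$ coming from $e$, which crosses the cut since $u \in A, v \in B$), so the cut capacity in $H$ strictly exceeds that in $G$. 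This violates Observation \ref{obscut} applied to the LOCC transformation from $G$ to $H$, yielding the required contradiction.

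Combining the length-$1$ paths for edges of $E(G) \cap E(H)$ with the $u$-to-$v$ path for $e$ extracted above gives a family of pairwise edge-disjoint paths in $G$ indexed by $E(H)$, and Lemma \ref{disjointpaths} then delivers $G \geq_R H$. The main conceptual step is the cut/Menger argument: the very fact that LOCC created the edge $e$ without increasing any cut capacity forces the existence of a connecting path in $G$ that avoids the edges reserved for the trivial paths; the rest of the argument is routine bookkeeping.
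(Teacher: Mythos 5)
Your proof is correct and takes essentially the same approach as the paper: both hinge on the fact that any cut separating $u$ from $v$ must contain an edge of $E(G)\setminus E(H)$, since otherwise the cut capacity would increase under LOCC, and this yields a $u$--$v$ path avoiding the edges of $H$. The only difference is presentational: the paper extracts this path constructively by iteratively growing a tree from $u$, whereas you argue via connectivity (Menger) in the subgraph on $E(G)\setminus E(H)$.
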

\begin{proof}
If $E(H)\setminus E(G)$ is empty, we can create $H$ from $G$ by
a sequence of EPR edge destructions. 
For the case where $E(H)\setminus E(G)=1$, 
we present a constructive proof depicting  
a path from $u$ to $v$ in $G$, for the single edge 
$\{u,v\}\in E(H)\setminus E(G)$, where all the edges of the 
constructed path belong to $E(G)\setminus E(H)$.
This path is sufficient to establish $G\geq_R H$.

We construct the path from $u$ to $v$ in $E(G)\setminus E(H)$ 
as follows.
Initialize the vertex set $C_1=\{u\}$ and the set of edges
defined on $C_1$ as $E(C_1)=\phi$. Set  $i=1$.
Perform the following steps until termination in Step 4.
\begin{enumerate}
\item
Consider the cut 
$(C_i,V\setminus C_i)$. Since edge $\{u,v\}\in E(H)\setminus E(G)$,  
the capacity of this cut cannot 
increase under LOCC. So, there must be vertices 
$u_i^{'} \in C_i$ and $w_i\in V\setminus C_i$
such that the edge $\{u_i^{'},w_i\}$ is in $E(G)\setminus E(H)$.
Find such an edge $\{u_i^{'},w_i\}$.
\item $C_{i+1}=C_i\cup \{w_i\}$.
\item
$E(C_{i+1})=E(C_i)\cup {\{u_i^{'}, w_i\}}$ i.e., add the edge
to $E(C_{i})$, yielding tree $E(C_{i+1})$ over vertex set $C_{i+1}$.
\item
If $w_i=v$ then stop else $i=i+1$.

\end{enumerate}
The invariant at the beginning of each iteration of the 
above procedure is that the subgraph $G(C_i,E(C_i))$ of $G$ with
vertex set $C_i$ and edge set $E(C_i))$ is a tree; 
this subgraph is connected, and has exactly $|C_i|-1$ edges. 
When the process terminates,
the tree $E(C_i)$ in $E(G)\setminus E(H)$ contains
both $u$ and $v$. So, there is a path from $u$ to $v$ using
edges entirely from $E(G)\setminus E(H)$.
 
Using this path
we can perform LOCC transformations in our restricted model 
using repeated teleportation steps, thereby creating the 
only new EPR edge $\{u,v\}\in E(H)\setminus E(G)$. 
We can destroy the remaining 
EPR pairs from $G$ that do not
belong to $E(H)$, finally yielding $H$. So, $G\geq_R H$. 
\end{proof}

If an EPR graph $H$ can be obtained from an EPR graph 
$G$ by a sequence of good transformations, then 
we know that $G\geq_R H$ by the repeated application of 
Lemma \ref{good}. For the converse, suppose $G\geq_R H$. Then, by 
Lemma \ref{disjointpaths}, there are edge disjoint paths in $G$
for all edges $\{u,v\}$ in $H$. In order to generate $H$ from $G$, 
we may therefore use such disjoints paths, one at a time, to
generate the edges in $H$ that do not exist in $G$. Each such 
transformation is a good transformation since it generates 
at most one new EPR edge in the resulting intermediate EPR graph; 
using a sequence of such good transformations, $G$ can be converted to $H$.
We now summarize our characterizations as follows.

\begin{theorem}
Let $G$ and $H$ be EPR graphs defined on the same vertex set. The
following statements are equivalent.
\begin{enumerate}
\item $G\geq_R H$.
\item There are edge disjoint paths in $G$ from $u$ to $v$ for
all edges $\{u,v\}$ in $H$.
\item $H$ can be obtained from $G$ by a sequence of good transformations.
\end{enumerate}
\label{characthm}
\end{theorem}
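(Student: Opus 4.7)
The plan is to establish the three-way equivalence by a cycle of implications, drawing on the two preceding lemmas. The implication $(1) \Leftrightarrow (2)$ is already packaged as Lemma \ref{disjointpaths}, so no new argument is required there. What remains is to link $(3)$ into the cycle, which I would do by showing $(3) \Rightarrow (1)$ and $(1) \Rightarrow (3)$ separately.

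For $(3) \Rightarrow (1)$, I would argue by induction on the length of the sequence of good transformations converting $G$ to $H$. Each single good transformation $G' \to G''$ gives $G' \geq_R G''$ by Lemma \ref{good}, and the concatenation of the underlying restricted LOCC protocols (edge destructions and teleportations) shows that $\geq_R$ is transitive. Composing the relations along the sequence yields $G \geq_R H$.

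The main content lies in $(1) \Rightarrow (3)$. Assuming $G \geq_R H$, I would apply the already-established $(1) \Rightarrow (2)$ to obtain an edge-disjoint family of paths $\{P_e : e \in E(H)\}$ in $G$, where $P_e$ connects the endpoints of $e$. Enumerate $E(H) \setminus E(G) = \{e_1, \ldots, e_k\}$. I would then construct intermediate EPR graphs $G = G_0, G_1, \ldots, G_k, G_{k+1} = H$, where $G_i$ is obtained from $G_{i-1}$ by collapsing $P_{e_i}$ through successive teleportation steps into the single new EPR edge $e_i$ (leaving all other edges of $G_{i-1}$ untouched), and $G_{k+1}$ is obtained from $G_k$ by destroying any remaining edges in $E(G_k) \setminus E(H)$. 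Each transition $G_{i-1} \to G_i$ introduces at most one new EPR edge, namely $e_i$, so it qualifies as a good transformation; the terminal destruction step introduces no new edges and is therefore trivially good as well.

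The main obstacle is the bookkeeping required to verify that processing the paths sequentially is legitimate: when $P_{e_i}$ is collapsed, its interior edges are consumed, and one must ensure these edges are neither needed by any later path $P_{e_j}$ with $j > i$ nor by any edge of $H$ that happens to coincide with an interior edge of $P_{e_i}$. The first concern is handled precisely by the edge-disjointness supplied by Lemma \ref{disjointpaths}; the second is vacuous because interior edges of $P_{e_i}$ lie in $E(G)$, so if such an edge also lies in $E(H)$ it would belong to $E(G) \cap E(H)$ rather than to the collection $\{e_1,\ldots,e_k\} \subseteq E(H) \setminus E(G)$ that drives the construction. Once these disjointness invariants are confirmed, the sequence $G_0, G_1, \ldots, G_{k+1}$ witnesses statement $(3)$, completing the cycle.
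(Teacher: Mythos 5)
Your argument follows essentially the same route as the paper's: the equivalence of (1) and (2) is exactly Lemma \ref{disjointpaths}, (3)$\Rightarrow$(1) is the repeated application of Lemma \ref{good} together with transitivity of $\geq_R$, and (1)$\Rightarrow$(3) collapses the edge-disjoint paths one at a time, each collapse introducing at most one new edge and hence being a good transformation. One small remark: your dismissal of the possibility that an edge of $E(G)\cap E(H)$ lies in the interior of some $P_{e_i}$ is not quite right as stated (such an edge would be consumed and lost), but the repair is immediate since Lemma \ref{disjointpaths} supplies an edge-disjoint path $P_f$ for \emph{every} $f\in E(H)$, so one simply collapses all of these paths, not only those for $f\in E(H)\setminus E(G)$.
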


Given two EPR graphs $G$ and $H$,
the decision problem of determining whether $G \geq_R H$
is NP-hard since this problem can be used to solve the decision problem 
of checking for edge disjoint paths in $G$. The problem of deciding the
existence of edge disjoint paths in graphs was shown to
be NP-complete in \cite{disjoint}. 

It remains open to determine whether our restricted LOCC $\geq_R$ 
is powerful enough to capture (general) LOCC for EPR graphs.
We believe that the two models are equally powerful for EPR 
graphs.

\section{Concluding remarks}
Partial entropic criteria are not sufficient for 
demonstrating LOCC incomparability between multipartite states. New techniques 
need to be developed for a better understanding
of LOCC comparability. Further, it would be interesting 
to investigate whether the restricted 
model of LOCC studied in this paper (which uses only teleportation and 
EPR pair destruction),
is powerful enough to capture LOCC in general for multipartite states
comprising multiple EPR pairs. 
For LOCC incomparable ensembles, the amount of
quantum communication necessary for transformations, and the possibility
of classifications based on some notions of quantum distance between ensembles
may be studied.

\vspace{.2cm}
\noindent {\it Acknowledgments:} The authors would like to thank Simone
Severini of the Institute for Quantum Computing, University of Waterloo
for discussions, and the referees of a previous version of this paper
for their helpful comments and suggestions. 
S. P. Pal acknowledges Sudhir Kumar Singh of UCLA
and R. Srikanth of Raman Research Institute, Bangalore 
for discussions.
Arijit Ghosh's research at Perimeter Institute for Theoretical Physics was 
supported in
part by the Government of Canada through NSERC and by the Province of
Ontario through MRI.

%
%
%
%

\appendix

\section{Bicolored merging and partial entropy}




In this appendix we show the equivalence of {\it partial
entropic criteria} and the technique of {\it bicolored merging}
in establishing LOCC incomparability of 
multipartite states represented by EC hypergraphs and EPR graphs.
Let $A$ and $B$ be disjoint sets of parties sharing the
quantum state $\rho^{AB}$ between them.
If $\rho^{AB}=\rho \otimes \sigma$, where $\rho$ is the
density operator of the system $A$, and $\sigma$ is a
density operator for the system $B$, 
then 
we know
from page 106. of
\cite{nc} 
that the partial entropy 
\begin{equation}
\rho^{A}=tr_{B}(\rho^{AB})=tr_{B}(\rho \otimes\sigma)=\rho. \label{E9}
\end{equation}
Also, if $\rho^{AB}=\rho$ where
$\rho$ is the density operator of the system $A$. Then the partial entropy
\begin{eqnarray}
\label{E200} \rho^{A}=tr_{B}(\rho^{AB})&=&tr_{B}(\rho)=\rho.
\end{eqnarray}
We also use the important property of {\it Von-Neumann entropy} from
page 514 of \cite{nc} that,
\begin{eqnarray}
\label{E20}
S(\rho\otimes\sigma)&=&S(\rho) + S(\sigma)
\end{eqnarray}
Let the parties of $A$ and $B$
share an $n$-CAT state where the first $r$ qubits is
with the parties of the set $A$ and the remaining
qubits from $r+1$ to $n$ is with the parties of the set
$B$. This state has the density operator
\begin{equation}
\rho^{AB}=\left(\frac{|0_{1}\dots 0_{r}0_{r+1}\dots0_{n}\rangle+
           |1_{1}\dots
           1_{r}1_{r+1}\dots1_{n}\rangle}{\sqrt{2}}\right)\nonumber
           \left(\frac{\langle0_{1}\dots 0_{r}0_{r+1}\dots0_{n}| +
           \langle1_{1}\dots 1_{r}1_{r+1}\dots1_{n}|}{\sqrt{2}}\right)
\end{equation}
We denote  $|0_{1}\dots 0_{r}\rangle$ by $|0_{A}\rangle$, and
$|1_{1}\dots 1_{r}\rangle$ by $|1_{A}\rangle $. We use similar notation for
the set $B$. Then,
\begin{eqnarray}
\label{E3}
\rho^{AB}&=&\left(\frac{|0_{A}0_{B}\rangle+|1_{A}1_{B}\rangle}{\sqrt{2}}\right)
     \left(\frac{\langle0_{A}0_{B}| + \langle1_{A}1_{B}|}{\sqrt{2}}\right)\nonumber\\
         &=&\frac{|0_{A}0_{B}\rangle\langle0_{A}0_{B}|+
            |1_{A}1_{B}\rangle\langle0_{A}0_{B}|}{2}
            +\frac{|0_{A}0_{B}\rangle\langle1_{A}1_{B}|+
            |1_{A}1_{B}\rangle\langle1_{A}1_{B}|}{2}
\end{eqnarray}
Tracing out the system $B$ from $\rho_{AB}$, we find
the reduced density operator of the system $A$,
\begin{eqnarray}
\label{E4}
\rho^{A}&=&tr_{B}(\rho^{AB})\nonumber\\
        &=&\frac{tr_{B}|0_{A}0_{B}\rangle\langle0_{A}0_{B}|+
            tr_{B}|1_{A}1_{B}\rangle\langle0_{A}0_{B}|}{2}
            +\frac{tr_{B}|0_{A}0_{B}\rangle\langle1_{A}1_{B}|+
            tr_{B}|1_{A}1_{B}\rangle\langle1_{A}1_{B}|}{2}\nonumber\\
        &=&\frac{|0_{A}\rangle\langle0_{A}|\langle0_{B}|0_{B}\rangle +
                 |1_{A}\rangle\langle0_{A}|\langle0_{B}|1_{B}\rangle}{2}
                 +\frac{|0_{A}\rangle\langle1_{A}|\langle1_{B}|0_{B}\rangle +
                 |1_{A}\rangle\langle1_{A}|\langle1_{B}|1_{B}\rangle}{2}\nonumber\\
        &=&\frac{|0_{A}\rangle\langle0_{A}|\langle0_{B}|0_{B}\rangle+
                 |1_{A}\rangle\langle1_{A}|\langle1_{B}|1_{B}\rangle}{2}\nonumber\\
        &=&\frac{|0_{A}\rangle\langle0_{A}|+|1_{A}\rangle\langle1_{A}|}{2}\nonumber\\
\end{eqnarray}
From \ref{E4} we get,
\begin{eqnarray}
\label{E10}
S_{B}(\rho^{AB})&=&-tr\left(\rho^{A}\log_{2}\rho^{A}\right)\nonumber\\
               &=&1.
\end{eqnarray}
So, from \ref{E10} we conclude that $S_{B}(\rho^{AB})=1$ if an 
$n-CAT$ is shared by parties of both the sets. Also, from \ref{E9}
and \ref{E200}, we have
\begin{eqnarray}
S_{B}(\rho^{AB})=0
\label{EXX}
\end{eqnarray}
if all entanglements are shared 
only by parties within sets
$A$ and $B$ but not across $A$ and $B$. 

We proceed to prove our Theorem \ref{bcm}.
Let $H$ denote and EC hypergraph. Let $V(H)=C\bigcup D$,
where $C$ and $D$ are disjoint.
We use the following notation.
Let $D(H)$ (or $C(H)$) denote the set of 
hyperedges shared within the elements of
the set $D$ (or $C$).
Let $CD(H)$ denote the set of hyperedges shared across the 
sets $C$ and $D$.
We say $H_1 \ngeq H_2$ if  
$H_1 \geq H_2$ does not hold.  

\begin{theorem}
\label{bcm}
Let $H_1$ and $H_2$ be two {\it entanglement hypergraphs}
shared between the geographically separated
parties $p_1, p_2, \dots, p_n$. If it can be shown
using partial entropic criteria
that $H_1 \ngeq H_2$, 
then there exists a bicolored merging scheme establishing
$H_1 \ngeq H_2$. 
\end{theorem}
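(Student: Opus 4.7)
The plan is to show that whenever the partial-entropy criterion rules out $H_1 \geq H_2$ via some bipartition, the very same bipartition, used as the two-coloring of the parties, gives a bicolored merging whose EPR-count in $H_1$ is strictly smaller than that in $H_2$. The key technical fact to establish is that, for any bipartition $(A,B)$ of the $n$ parties, the partial entropy $S_B(\rho_H^{AB})$ of the global state associated with an EC hypergraph $H$ equals the number of hyperedges of $H$ that have at least one vertex in $A$ and at least one in $B$ — that is, the capacity of the cut $(A,B)$ in $H$, which is exactly what bicolored merging under this bipartition counts.

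To establish this equality I will exploit the tensor-product structure of the ensemble. The global state associated with $H$ is the tensor product over all hyperedges of the corresponding CAT states. By the additivity property \eqref{E20} of the Von-Neumann entropy, $S_B(\rho_H^{AB})$ decomposes as the sum over hyperedges of the partial entropy contribution of each CAT state across the cut $(A,B)$. For a hyperedge entirely inside $A$ or entirely inside $B$, equations \eqref{E9} and \eqref{E200} give a contribution of $0$, as verified in \eqref{EXX}. For a hyperedge with some vertices in $A$ and some in $B$, the calculation in \eqref{E3}--\eqref{E10} gives a contribution of exactly $1$. Summing over hyperedges yields $S_B(\rho_H^{AB})$ equal to the number of hyperedges crossing $(A,B)$, as claimed.

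With this equality in hand, the theorem follows almost immediately. Assume partial entropic criteria establish $H_1 \ngeq H_2$; then by definition there is a bipartition $(A,B)$ with $S_B(\rho_{H_1}^{AB}) < S_B(\rho_{H_2}^{AB})$ (strict, since entropy cannot increase under LOCC). By the equality just proved, this means the cut $(A,B)$ has strictly fewer hyperedges crossing it in $H_1$ than in $H_2$. Now color the parties in $A$ with color $1$ and those in $B$ with color $2$, and perform bicolored merging under this coloring: each crossing hyperedge becomes a single hypothetical EPR pair between the merged parties $A$ and $B$. The resulting bicolor merged graphs $H_1^{bcm}$ and $H_2^{bcm}$ therefore have EPR-counts equal to the respective cut capacities, so $H_1^{bcm}$ has strictly fewer EPR pairs than $H_2^{bcm}$. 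By the bicolored merging criterion, this certifies $H_1 \ngeq H_2$.

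I expect the main obstacle to be bookkeeping the additivity step cleanly: one must be careful that the tensor factorization of the global state matches the partition of hyperedges into ``crossing'' versus ``non-crossing'', and that regrouping the non-crossing factors into a single $A$-local tensor and a single $B$-local tensor is legitimate so that \eqref{E9} and \eqref{E200} apply. Once that regrouping is justified, the rest is the straightforward per-hyperedge contribution computation already given in the appendix, and the translation between ``cut capacity'' and ``bicolored merged EPR-count'' is immediate from the definition of bicolored merging.
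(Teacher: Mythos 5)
Your proposal is correct and follows essentially the same route as the paper's own proof: the same bipartition is reused as the two-coloring, the partial entropy is shown to equal the cut capacity by factoring the global state over hyperedges, applying additivity of the Von Neumann entropy, and computing a per-hyperedge contribution of $0$ for non-crossing and $1$ for crossing hyperedges, after which the BCM EPR-count comparison is immediate. The regrouping of non-crossing factors that you flag as the main obstacle is exactly the step the paper carries out in equations \eqref{E11}--\eqref{E13}.
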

\begin{proof}
Suppose there is a subset $X$ of ${p_1,p_2, \dots,p_n}$ such that
$S_{X}(H_1)<S_{X}(H_2)$, thereby ensuring that $H_1$ cannot be transformed
to $H_2$ using LOCC. Since each hyperedge corresponds to a $GHZ$ state
in the EC hypergraph, we denote the corresponding maximal
entanglement associated with the hyperedge $e$ as $|e\rangle$. 
Let 
the hyperedges of $H_1$ be $e_{11}, e_{12}, \dots,
e_{1r}, \dots$, and those belonging to $H_2$ be $e_{21}, e_{22},
\dots, e_{2j}, \dots$. For $H_1$,
\begin{equation}
\rho^{H_1}= \bigotimes_{e_{1i}\in E(H_1)}|e_{1i}\rangle \langle
e_{1i}|
\end{equation}
Therefore,
\begin{eqnarray}
\label{E11}
\rho_{X}^{H_1}&=& tr_{X}\bigotimes_{e_{1i} \in E(H_1)}|e_{1i}\rangle \langle e_{1i}| \nonumber\\
              &=&\bigotimes_{e_{1i} \in E(H_1)}tr_{X}|e_{1i}\rangle \langle e_{1i}| \\
              &=&\left(\bigotimes_{e_{1s}\in X(H_1)}tr_{X}|e_{1s}\rangle \langle
              e_{1s}|\right)
    \otimes\left(\bigotimes_{e_{1t}\in \bar X(H_1)}tr_{X}|e_{1t}\rangle \langle
   e_{1t}|\right)\nn
    \otimes\left(\bigotimes_{e_{1u}\in X\bar X(H_1)}tr_{X}|e_{1u}\rangle \langle
    e_{1t}|\right)
\end{eqnarray}
From $\ref{E9}$ we get,
\begin{equation}\label{E12}
\rho_{X}^{H_1}=\left(\bigotimes_{e_{1t}\in \bar
                    X(H_1)}tr_{X}|e_{1t}\rangle \langle
                    e_{1t}|\right)
                    \otimes
                    \left(\bigotimes_{e_{1u}\in X\bar X(H_1)}tr_{X}|e_{1u}\rangle \langle
                    e_{1u}|\right)
\end{equation}
From \ref{E200} we get,
\begin{equation}
\label{E13} \rho_{X}^{H_1}=\left(\bigotimes_{e_{1t}\in \bar
X(H_1)}|e_{1t}\rangle \langle e_{1t}|\right)
    \otimes\left(\bigotimes_{e_{1u}\in X\bar X(H_1)}tr_{X}|e_{1u}\rangle \langle e_{1u}|\right)
\end{equation}
We know that
$S_{\bar X}(H_1)= S(\rho_{X}^{H_1})$.
From equations \ref{E20} and \ref{EXX} we get,
\begin{eqnarray}
S_{\bar X}(H_1)&=&S\left(\bigotimes_{e_{1t}\in \bar
                  X(H_1)}|e_{1t}\rangle \langle e_{1t}|\right)
                  + S\left(\bigotimes_{e_{1u}\in X\bar X(H_1)}tr_{X}|e_{1u}\rangle
                  \langle e_{1u}|\right)\nn\\
               &=&S\left(\bigotimes_{e_{1u}\in X\bar
                  X(H_1)}tr_{X}|e_{1u}\rangle \langle e_{1u}|\right)
\end{eqnarray}
From \ref{E20} we get,
\begin{equation}
S_{\bar X}(H_1)=\sum_{e_{1u}\in X\bar
X(H_1)}S\left(tr_{X}|e_{1u}\rangle \langle e_{1u}|\right) \nn
\label{E40} 
\end{equation}
We know that,
\begin{equation}
S\left(tr_{X}|e_{1u}\rangle \langle
e_{1u}|\right)=1,~~\forall~~|e_{1u}\rangle~~\in~~X\bar X(H_1)
\label{E50} 
\end{equation}
From \ref{E50},
\begin{eqnarray}
S_{\bar X}(H_1)&=&\mbox{number of hyperedges containing
at
least one}\nn \\
&&\mbox{party of $X$ and $\bar X=\{p_1,p_2, \dots,p_n\}\setminus X$}\nn\\
\label{E5} 
\end{eqnarray}
Similarly for $H_2$,
\begin{eqnarray}
S_{\bar X}(H_2)&=& S(\rho_{X}^{H_2}) \nonumber\\
               &=&-tr{\rho_{X}^{H_2}\log_2 \rho_{X}^{H_2}}\nonumber\\
               &=& \mbox{number of hyperedges containing at least one}\nonumber\\
                && \mbox{party of $X$ and $\bar X=\{p_1,p_2, \dots,p_n\}\setminus X$}\nonumber\\
\label{E5111} 
\end{eqnarray}

Now we use the bicolored merging scheme. We color all the
vertices in the set $X$ with one color and collapse them into 
the merged party $A$; we color the rest of the vertices with 
another color and collapse them into another merged party 
$B$. The number of (hypothetical) EPR pairs in the reduced bicolor-merged
graph (BCM graph) $H_{1}^{bcm}$ 
(or $H_{2}^{bcm}$) 
after bicolored merging, is equal to
number of hyperedges containing elements of both $X$ and
$\{p_1,p_2, \dots,p_n\} \setminus X$, which is equal to $S_{X}(H_1)$
(or $S_{X}(H_2)$)
from equation \ref{E5} (\ref{E5111}). Since
$S_{X}(H_1) < S_{X}(H_2)$, the number of
$EPR$ pairs in $H_{1}^{bcm}$ is less than those 
present in $H_{2}^{bcm}$. So, $H_{1}$ cannot be
transformed to $H_{2}$ using LOCC.
\end{proof}

The number of edges in the reduced graph obtained after bicolored
merging is equal to the capacity of the corresponding cut.
The principle of bicolored merging can therefore be restated as follows:
\begin{lemma}\label{cut}
Suppose $G$ and $H$ are EC hypergraphs such that $G \geq H$.
Then the cut capacity across a cut in $H$ cannot be greater
than the cut capacity across the same cut in $G$.
\end{lemma}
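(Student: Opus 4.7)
The plan is to derive Lemma \ref{cut} directly from the partial-entropy machinery already assembled in the proof of Theorem \ref{bcm}, since that proof essentially computes the cut capacity as a Von-Neumann entropy. Fix a cut $(A,B)$ of the shared vertex set, that is, a bipartition $V=A\sqcup B$. I want to show that the number of hyperedges of $H$ having vertices on both sides of this partition is at most the corresponding number for $G$.

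First I would recall that in the proof of Theorem \ref{bcm} it is established, via equations \eqref{E9}, \eqref{E200}, \eqref{E20}, and \eqref{EXX}, that for any EC hypergraph $K$ on $V$,
\begin{equation*}
S_{B}(\rho^{K}) \;=\; \#\{\text{hyperedges of }K\text{ that meet both }A\text{ and }B\},
\end{equation*}
because hyperedges lying entirely inside $A$ or entirely inside $B$ contribute $0$ to the partial entropy, while each hyperedge that straddles the cut contributes exactly $1$ (being a CAT state with parties on both sides). In particular, the right-hand side is precisely the capacity of the cut $(A,B)$ in $K$. So it suffices to prove that $S_{B}(\rho^{G}) \geq S_{B}(\rho^{H})$ whenever $G \geq H$.

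Next I would invoke the fundamental fact that partial (Von-Neumann) entropy across a bipartition is non-increasing under LOCC between the two sides; this is the standard entropic monotonicity from which the entire bicolored-merging argument derives its power, and it is the property that was used implicitly throughout Section \ref{bipar}. Formally, if $G \geq H$ via an LOCC protocol, then merging the parties in $A$ into a single agent and those in $B$ into a single agent yields a bipartite LOCC protocol converting $\rho^{G}$ to $\rho^{H}$, under which the entropy of either reduced state cannot increase. Combining this with the identification of partial entropy and cut capacity from the previous paragraph gives
\begin{equation*}
\mathrm{cap}_{G}(A,B) \;=\; S_{B}(\rho^{G}) \;\geq\; S_{B}(\rho^{H}) \;=\; \mathrm{cap}_{H}(A,B),
\end{equation*}
which is exactly the statement of the lemma.

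There is no real obstacle here, since both ingredients (the entropy--capacity identity and the LOCC-monotonicity of bipartite entropy) are already in hand from the proof of Theorem \ref{bcm}; the only care needed is to state the bipartite merging step cleanly, observing that any $n$-party LOCC protocol restricted to the coarser two-party split remains a valid (two-party) LOCC protocol, so that the bipartite entropic inequality applies without modification.
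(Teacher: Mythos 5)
Your proposal is correct and follows essentially the same route as the paper: the lemma is presented there as a direct restatement of the bicolored-merging/partial-entropy principle, whose content is exactly your two ingredients --- the identification of the cut capacity with the partial entropy $S_{B}(\rho^{K})$ (computed in the appendix via equations \eqref{E9}, \eqref{E200}, \eqref{E20}, \eqref{EXX}) and the non-increase of that entropy under the induced two-party LOCC protocol. Your explicit remark that coarse-graining an $n$-party LOCC protocol to the bipartition $(A,B)$ yields a valid bipartite LOCC protocol is a point the paper leaves implicit, and is worth making.
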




\begin{thebibliography}{99}

\bibitem{teleport} C. H. Bennett, G. Brassard, C. Crepeau, 
R. Jozsa, A. Peres, and W. K. Wootters,
Teleporting an unknown quantum state via dual
classical and Einstein-Podolsky-Rosen channels,
Phys. Rev. Lett. 70, 1895 (1993).

\bibitem{nc}
M. A. Nielsen and I. L. Chuang,
\emph{Quantum Computation and Quantum Information Theory}, Cambridge
University Press, Cambridge 2002.

\bibitem{neil99}
M. Neilsen.
\newblock Conditions for a class of entanglement transformations,
\newblock {\em Phy. Rev. Lett.}, 83(2):436-439, 1999.

\bibitem{sperner}
Emanuel Sperner,
\emph{Ein Satz über Untermengen einer endlichen Menge}
Math. Z., 27:544-548, 1928.

\bibitem{bennett00}
C. H. Bennett, S. Popescu, D. Rohrlich, J. Smolin and A. Thapliyal,
\emph{Exact and asymptotic measures of multi-partite pure state
entanglement distillation}, Phys. Rev. A 63,
012307 (2000), quant-ph/9908073.

\bibitem{bdht}
H. Buhrman, W. van Dam, P.H\.{o}yer, and A.Tapp.
\newblock Multiparty quantum communication complexity.
\newblock {\em Physical Review A}, 60(4):2737, 1999.

\bibitem{bcd}
H. Buhrman, R. Cleve, and W. van Dam.
\newblock Quantum entanglement and communication complexity.
\newblock {\em SIAM Journal of Computing}, 30(6):1829--1841, 2001.

\bibitem{cb}
R. Cleve and H. Buhrman.
\newblock Substituting quantum entanglement for communication.
\newblock {\em Physical Review A}, 56(2):1201--1204, 1997.

\bibitem{lin:pop:sch:wes:99}
Linden N., Popescu S., Schumacher B. and Westmoreland M.,
{\it Reversibility of local transformations of multiparticle
entanglement},
LANL eprint { \tt quant-ph/9912039}.




\bibitem{mscthesis}
S. K. Singh,
\emph{Combinatorial approaches in quantum information theory},
M. Sc. thesis, Department of Mathematics, IIT Kharagpur, India,
(2005); eprint \texttt{quant-ph/0405089}, (2004).

\bibitem{btechthesis}
A. Ghosh and V. S. Shekhawat,
\emph{Hypergraph theoretic characterizations of multipartite
entanglement ensembles: LOCC transformations and incomparability},
B. Tech. thesis, Department of Computer Science and Engineering,
IIT Kharagpur, India,
(2007).

\bibitem{sin:pal:kum:sri}
S. K. Singh, S. P. Pal, Somesh Kumar, R. Srikanth,
\emph{A combinatorial approach for studying local operation and
classical communication transformations of multipartite states},
J. Math. Phys. 46, 122105 (2005); eprint \texttt{quant-ph/0406135 v3}.

\bibitem{disjoint}
M. Middendorf and F. Pfeiffer.
\emph{On the complexity of the disjoint paths problem.}
\texttt{Combinatorica, 13:97-107, 1993.}

\bibitem{pal:kum:sri}
Pal S. P., S. Kumar and R. Srikanth,
\emph{Multipartite entanglement configurations: Combinatorial
offshoots into (hyper)grah theory and their ramifications},
Quantum Computing: Back Action, IIT Kanpur, 6-12 March 2006, AIP
Conference Proceedings, vol. 864, pp. 156-170.



\end{thebibliography}
\end{document}